\newtheorem{ex}{Example}
\newcommand{\Cc}{{\cal C}}
\newcommand{\Mc}{{\cal M}}
\newcommand{\vv}{{\bf v}}
\newcommand{\xv}{{\bf x}}
\begin{document}

\title{Multiuser Detection in Multibeam Satellite Systems: Theoretical Analysis and Practical Schemes}

\author{Giulio Colavolpe,~\IEEEmembership{Senior Member,~IEEE}, Andrea Modenini, Amina Piemontese,~\IEEEmembership{Member,~IEEE}, and Alessandro Ugolini
\thanks{G. Colavolpe and A. Ugolini are with are with Universit\`a di Parma, Dipartimento di Ingegneria dell'Informazione, Viale delle Scienze, 181A, I-43124 Parma, Italy (e-mail: \{giulio, alessandro.ugolini\}@unipr.it).}
\thanks{A. Modenini is  with  ESA-ESTEC, Noordwijk, The Netherlands (e-mail: andrea.modenini@esa.int).}
\thanks{A. Piemontese is with the Department of Signals and Systems, Chalmers University of Technology, Gothenburg, Sweden (e-mail: aminap@chalmers.se).}
\thanks{This work is funded by the European Space Agency, ESA-ESTEC, Noordwijk, The Netherlands. The view expressed herein can in no way be taken to reflect the official opinion of the European Space Agency.}
\thanks{The paper was presented in part at the IEEE Intern. Conf. Commun. (ICC'15), London, UK, June 2015 and at the IEEE Intern. Workshop on Signal Process. Advances in Wireless Commun. (SPAWC'15), Stockholm, Sweden, June 2015.}
% \thanks{Submitted: . Revised:  and~\today.}
}

\maketitle

\begin{abstract}
We consider the rates achievable by a user in a multibeam satellite system for unicast applications, and propose alternatives to the conventional single-user symbol-by-symbol detection applied at user terminals. Single-user detection is known to suffer from strong degradation when the terminal is located near the edge of the coverage area of a beam, and when aggressive frequency reuse is adopted. For this reason, we consider multiuser detection, and take into account the strongest interfering signal. We also analyze two additional transmission strategies requiring modifications at medium access control layer. We describe an information-theoretic framework to compare the different strategies by computing the information rate of the user in the reference beam. Furthermore,  we analyze the performance of coded schemes that could approach the information-theoretic limits. We show that classical codes from the DVB-S2(X) standard are not suitable when multiuser detection is adopted, and we propose two ways to improve the performance, based on the redesign of the code and of the bit mapping.
\end{abstract}
%\begin{keywords}
%Multibeam satellite channels, multiuser detection, information rate analysis.
%\end{keywords}

\section{Introduction}\label{s:intro} 
The recent years have witnessed the explosion of satellite services and applications, and the related growing demand for high data rates. In particular, satellite systems, which are broadcast by nature, can be also used for broadband interactive, and thus unicast, transmissions. The 2nd-generation specification of the digital video broadcasting for satellite (DVB-S2) standard~\cite{DVB-S2-TR}, developed in 2003, and its evolution, approved in 2014 with the name of DVB-S2X~\cite{DVB-S2X}, represent illuminating examples in this sense.

Next-generation satellite systems need new technologies to improve their spectral efficiency, in order to sustain the increasing request of new services. The grand challenge is to satisfy this demand by living with the scarcity of the frequency spectrum. Resource sharing is probably the only option, and can be implemented by adopting a multibeam system architecture, which allows to reuse the available bandwidth in many beams. The interference caused by resource sharing is typically considered undesirable, but a way to dramatically improve the spectral efficiency is to exploit this interference, by using interference management techniques at the receiver. 

In this paper, we consider the benefits of the adoption of multiuser detection at the user terminal in the forward link of a multibeam satellite system. Our reference is a DVB-S2(X) system~\cite{DVB-S2-TR,DVB-S2X}, where an aggressive frequency reuse is applied. Under these conditions, the conventional single-user detector (SUD) suffers from a severe performance degradation when the terminal is located near the edge of the coverage area of a beam, due to the high co-channel interference. On the other hand, the application of a decentralized multiuser detector (MUD) at the terminal, able to cope with the interference, can guarantee the required performance~\cite{AnAnCaCo14,CoAnPe14}. Of course a computational complexity increase must be paid. A parallel investigation on the same topics is reported in~\cite{CaPeAnGi15}.

The literature on multiuser detection is wide, and in the area of satellite communications it essentially focuses on the return link~\cite{PiGrCo13,BeElKa02,CoFePi11, Mo00,LeGr08,ArMo12,ChChKrOt13}, i.e., on the link from the user terminals to the gateway, and includes centralized techniques to be applied at the gateway. Less effort has been devoted to the forward link. Recently, we investigated in~\cite{AnAnCaCo14} the benefits that can be achieved, in terms of spectral efficiency, when high frequency reuse is applied in a DVB-S2 system, and multiuser detection is adopted at the terminal to manage the presence of strong co-channel interference. In~\cite{CoAnPe14}, the authors study the applicability of a low complexity MUD based on soft interference cancellation~\cite{AlGrRe98}. In both papers, the advantage of the proposed detectors is shown in terms of error rate. 

In this paper, we generalize the analysis of~\cite{AnAnCaCo14} by supplying an information-theoretic framework which allows us to evaluate the performance in terms of information rate (IR), without the need of lengthy error rate simulations, and hence strongly simplifying the comparison of various strategies and scenarios. The main results of this investigations are also reported in~\cite{CoMoPiUg15a}. Furthermore, we consider two additional transmission strategies, where the signals intended for the two beams cooperate to serve the two users (one in the first beam and the other in the second one). In one case, the two users in the adjacent beams are served  consecutively in a time division multiplexing fashion, instead of being served simultaneously. This approach is also considered in~\cite{CoMoPiUg15a,CaPeAnGi15}. In the other case, we consider the Alamouti space-time block code~\cite{Al98}, consisting in the two satellites exchanging the transmitted signals in two consecutive transmissions.

Finally, we show that the theoretical limits predicted by the information-theoretic analysis can be approached by practical coded schemes. As expected, the Alamouti precoding based schemes work well with the standard DVB-S2(X) modulation and coding formats (ModCods), designed for an interference-free scenario. On the other hand, we observe that DVB-S2(X) ModCods are not suitable for multiuser detection applications. Therefore, we analyze the convergence behavior of joint multiuser detection/decoding by means of an extrinsic information transfer (EXIT) chart analysis~\cite{te01}. We start by considering DVB-S2(X) ModCods  and quantify the loss with respect to the theoretical limits. Once identified the reasons for this performance loss, we prove that a large gain can be obtained from a redesign of the code and/or of the bit mapper. Part of this investigation is reported in~\cite{CoMoPiUg15b}.

According to the information-theory literature, the multibeam satellite channel is a broadcast channel, with the satellite serving multiple users on the ground. Particularly, we are in the case of a multiple-input multiple-output (MIMO) broadcast channel, since we have multiple antennas at the transmitter (for the different beams).\footnote{This definition of broadcast channel collides with the one commonly adopted in the satellite communication literature.  In fact, in the satellite community, a broadcast transmission refers to the case where one transmitter sends common information to several receivers, in contrast with unicast transmissions considered in this paper.} Nevertheless, we do not use the results concerning the broadcast channel capacity since we are not interested in the ultimate performance limits of the considered scenario. Our work focuses, instead, on the gain that can be achieved by one specific user if it employs a more involved detector, i.e., a MUD, when the receivers of the other users are not necessarily modified. In other words, we want to understand if and when it is worth to use a MUD to decode also the signal which is not intended for the reference user. For this aim, the theory of broadcast channels is not helpful and instead we borrow ideas from the Multiple Access Channel (MAC)~\cite{CoTh06}. Furthermore, it is known that the sum-rate capacity of  the MIMO broadcast channel is achieved by means of dirty-paper coding~\cite{CaSh03}, but nonlinear precoding leads to several problems when going to the practical implementation in satellite systems,  as the channel estimation, the synchronization and the non-linear effects introduced by the satellite amplifier (not necessarily a problem in the case of multicarrier signals). 

In the following, Section~\ref{s:sys_model} presents the system model and describes the three considered strategies and related detection techniques. The information-theoretic analysis is addressed in Section~\ref{s:scenario}, and gives us the necessary means for the computation of the IR for the reference beam. The EXIT chart analysis is described in Section~\ref{s:exit_C}. Section~\ref{s:num_res} presents the results of our study, whereas conclusions are drawn in Section~\ref{s:conclusions}.

\section{System Model and Considered Strategies}\label{s:sys_model}
\begin{figure}[t]
	\centering
	\includegraphics[width=0.6\columnwidth]{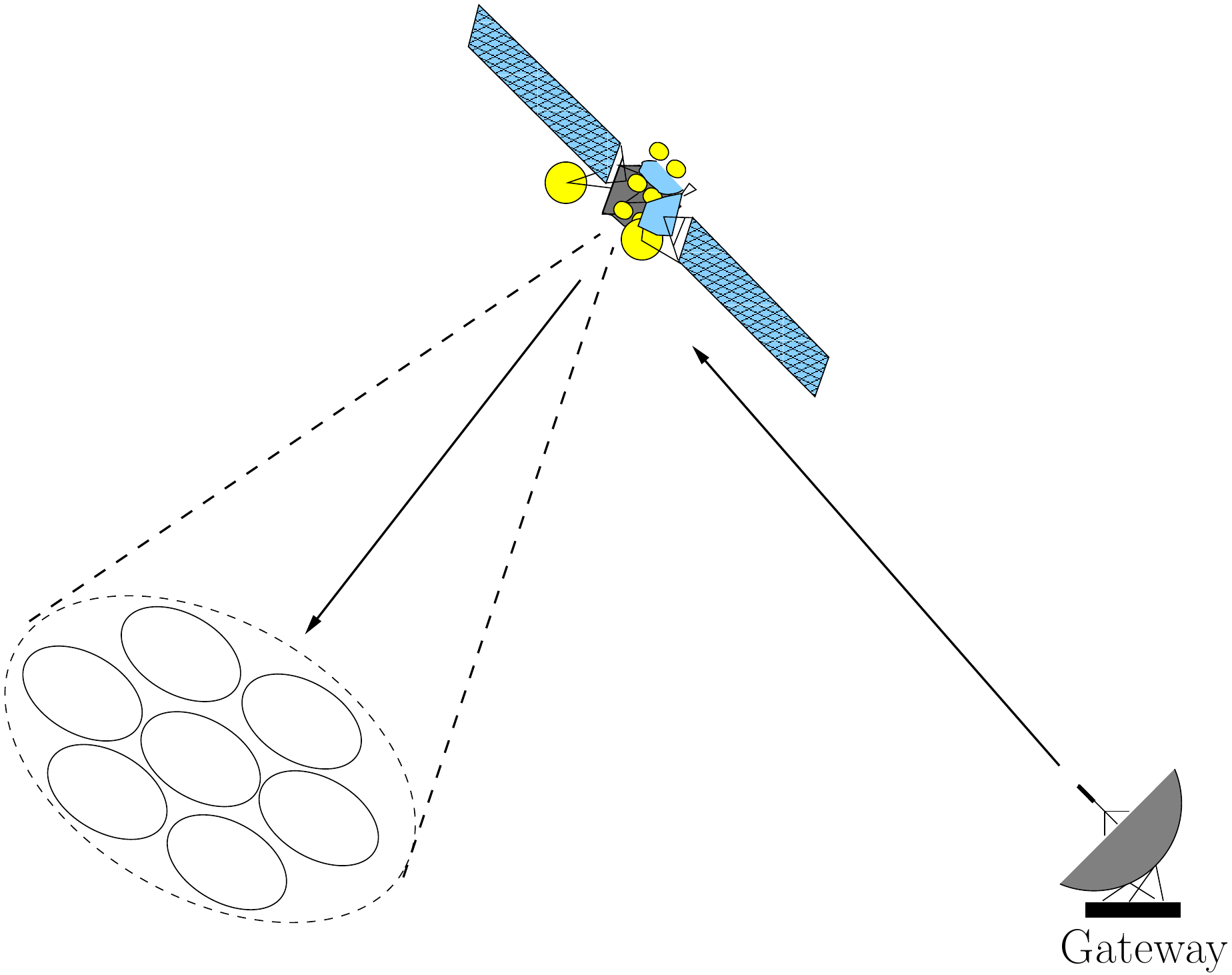}
	\caption{Forward link of a multibeam satellite system. Circles in the satellite service area represent beams.}
	\label{fig:architecture}
\end{figure}
\begin{figure}[t]
	\centering
	\includegraphics[width=0.6\columnwidth]{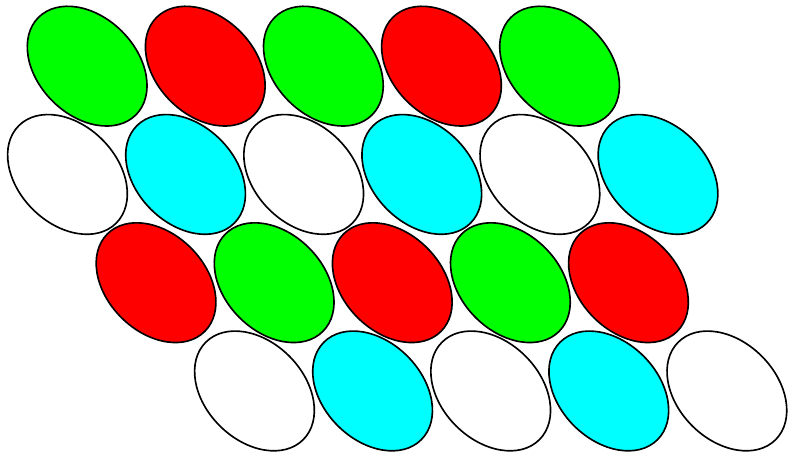}
	\caption{4-color frequency reuse scheme.}
	\label{fig:4c}
\end{figure}
We focus on the forward link (i.e., the link from the gateway to the user terminals through the satellite) of a multibeam satellite communication system for broadband interactive services, as illustrated in Figure~\ref{fig:architecture}. In this scenario, the service area is divided into small beams in order to reuse the frequency spectrum and hence to improve the spectral efficiency.  As an example, a 4-color frequency reuse scheme is shown in Figure~\ref{fig:4c}, where beams with the same color use the same bandwidth. In a 4-color frequency reuse scheme, the interference is very limited and can be neglected at the receiver. Thus, at the receiver, a SUD is employed. A more aggressive frequency reuse can be adopted with the aim of improving the system spectral efficiency. Figure~\ref{fig:2c} depicts the case of a 2-color frequency reuse scheme.  In this latter case, a SUD is still used at the receiver although the interference can be significant.
\begin{figure}[t]
	\centering
	\includegraphics[width=0.6\columnwidth]{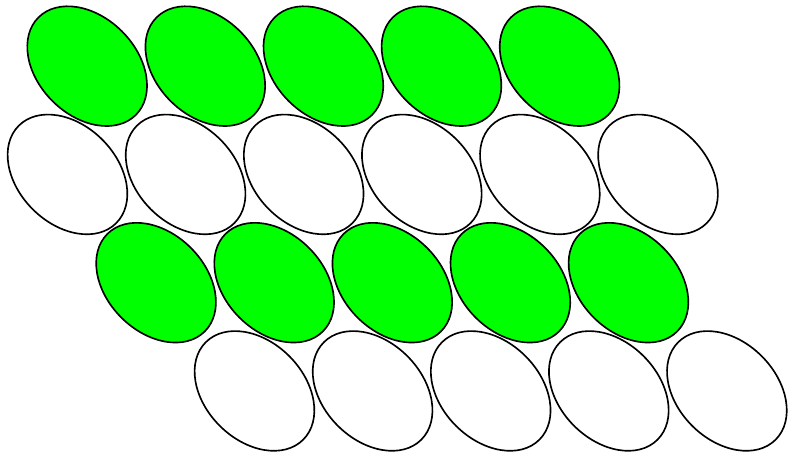}
	\caption{2-color frequency reuse scheme.}
	\label{fig:2c}
\end{figure}	
\begin{figure}
	\begin{center}
		\includegraphics[width=1.0\columnwidth]{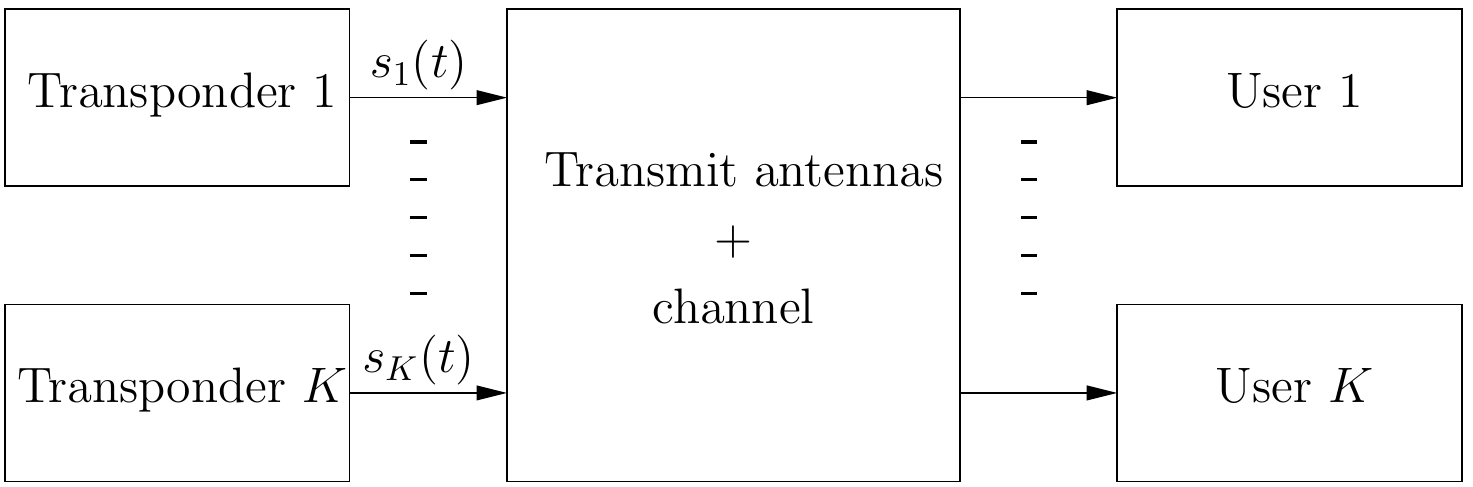}
		\caption{Schematic view of the considered architecture.}\label{fig:system}
	\end{center}
\end{figure}

Assuming an ideal feeder link (i.e., the link between the gateway and the satellite), Figure~\ref{fig:system} depicts a schematic view of the baseband model we are considering. Signals $s_i(t)$, $i=1,\dots,K$, are $K$ signals transmitted by a multibeam satellite in the same frequency band. The satellite is thus composed of $K$ transmitters (i.e., transponders) and serves $K$ users on the ground. The nonlinear effects related to the high power amplifiers which compose the satellite transponders are neglected since a multibeam satellite often works in a multiple carriers per transponder modality, and hence the operational point of its amplifiers is far from saturation~\cite{DVB-S2-TR}. We consider the case where the users experience a high level of co-channel interference, since we assume that they are located close to the edge of the coverage area of a beam and that an aggressive frequency reuse is applied (i.e., a number of colors lower than 4).

The signal received by a generic user can be expressed as
 \begin{equation}
 r(t)=\sum_{i=1}^K\gamma_i s_i(t)+w(t)\,, \label{eq:model}
 \end{equation}
where $\gamma_i$ are proper complex gains, assumed known at the receivers, and $w(t)$ is the thermal noise. Without loss of generality, we assume that ``User 1'' is the reference user and that $r(t)$ is its received signal. We also assume that $\gamma_1=1$, that $|\gamma_i|\ge|\gamma_{i+1}|$, and that the satellite has no way to modify the gains. The satellite could, in principle, change the power for each user, but this is not done in practice for the following reason. In a unicast scenario, if we consider a given frequency bandwidth, different users in a beam are served in time-division-multiplexing mode. Hence, different frames are sent to different users. These users can have different propagation conditions (e.g., some of them can experience rain attenuation) and interference. To take this into account, different ModCods are selected for the different users, in the so-called ACM (Adaptive Coding and Modulation) mode. As a consequence, different frames will use different modulation and coding formats. The gateway could also try to modify the power for each transmitted frame, and thus for each user. However, satellite transponders are equipped with analog automatic gain control circuits, which are very slow. A change in the power, frame by frame, would introduce strange amplitude fluctuations that the system cannot cope with.  Hence, a modification of the power allocation is not an option, at least considering the present transponder architecture.

We will evaluate the performance of the reference user considering the following three strategies, which imply different transmission and detection approaches.
\newline
\textbf{Strategy 1.} Signal $ s_i(t) $ is intended for user $i$, and we are interested in the evaluation of the performance for ``User~1'', whose information is carried by the signal with $i=1$. For this case, we evaluate the IR when ``User 1'' employs different detectors. In particular, we consider the case when ``User 1'' adopts:
\begin{itemize}
\item A SUD. Here, all interfering signals $s_i(t)$, $i=2,\dots,K$ are considered as if they were additional thermal noise. This is what is typically done in present systems.   
\item A MUD for the useful signal and one interferer. In this case, the receiver is designed to detect the useful signal and the most powerful interfering signal (that with $i=2$ in our model), which is assumed to adopt a fixed rate, whereas all the remaining signals are considered as if they were additional thermal noise. Data related to the interfering signal are discarded after detection. This case will be called MUD$\times$2 in the following. The complexity will be clearly larger than that of the benchmark system using a SUD. 

\end{itemize}
The aim of this analysis is to evaluate the performance improvement that can be obtained by simply using a more sophisticated receiver at the user terminal with no modifications of the present standard. In other words, this strategy is perfectly compliant with the DVB-S2(X) standard, since it simply requires the adoption of a different receiver. Our analysis can be easily extended to a MUD designed for more than two users although, given the actual signals' power profile, it has been shown in~\cite{AnAnCaCo14} that the MUD$\times$2 offers the best trade-off between complexity and performance.
\newline
\textbf{Strategy 2.} A different transmission strategy, requiring a modification at medium access control layer with respect to the previous case, is adopted in this case. Hence, in order to adopt this strategy, a modification of the DVB-S2(X) standard is required. Without loss of generality, we will consider detection of signals $s_1(t)$ and $s_2(t)$ and users 1 and 2 only. As in scenario~1, the remaining signals are considered as additional thermal noise. Instead of simultaneously transmitting  signal $s_1(t)$ to ``User~1'' and signal $s_2(t)$ to ``User~2'', as in the previous scenario, we here serve ``User 1'' first by employing both signals $s_1(t)$ and $s_2(t)$ for a fraction $\alpha$ ($0\le\alpha\le 1$) of the total time, and then ``User~2'' by employing both signals $s_1(t)$ and $s_2(t)$ for the remaining fraction $1-\alpha$ of the total time. From a system point of view, in order to maximize the throughput at system level, the best thing to do would be to serve the user with the best channel only. However, this would not be fair, since the user with the worst channel would never be served. Satellite operators are typically interested in serving each of the two users for half of the time or in trying to serve the users taking into account their different data rate needs. Signals $s_1(t)$ and $s_2(t)$ are independent (although carrying information for the same user). The receiver of each user must jointly detect both signals and its complexity is comparable to that of the MUD$\times$2 described for the first scenario.

In strategies 1 and 2, $s_1(t)$ and $s_2(t)$ are properly phase-shifted in order to maximize the IR.\footnote{We assume that the signals are modulated by using exactly the same carrier frequency, i.e., that a common reference oscillator is used for all signals.}
\newline
\textbf{Strategy 3.} As in the first strategy, $s_1(t)$ is for ``User~1'' and $s_2(t)$ for ``User~2''. We use two transponders to implement the Alamouti precoder~\cite{Al98}.  Unlike what happens in~\cite{Al98}, we do not use the Alamouti scheme to achieve a diversity gain, but as a way of orthogonalizing the two signals.  In this scheme, the two transmitters exchange the two information symbols in two successive transmission intervals. At the receiver, two consecutive observed samples are properly processed in order to remove the interference, and then fed to two SUDs. In this way, we can always transmit fully overlapped signals and perform only lower complexity operations at the receiver. To preserve the orthogonality of the two signals, in this approach, the same information has to be transmitted twice over two consecutive intervals. The IR is thus halved for this reason.

\section{Information-theoretic Analysis}\label{s:scenario}
In this section, we describe how to compute the IR related to ``User 1'' assuming the previously described strategies. This analysis gives us the ultimate performance limits of the considered satellite system, which will be used as a benchmark for the performance of practical coded schemes.

We start considering \textbf{strategy~1}, and describe how to compute the IR of ``User 1'' assuming the MUD$\times$2 receiver. The same technique can be used to compute the IR related to ``User 2'' and straightforwardly extends to the case of MUD for more than two users. The channel model assumed by the receiver is
	\begin{equation}
		y = x_1+ \gamma_2 x_2 + w \, ,\label{eq:mac}
	\end{equation}
where $x_i$ is the $M^{(i)}$-ary complex-valued symbol sent over the $i$-th beam and $w$ collects the thermal noise, with power $N$, and the remaining interferers that the receiver is not able to cope with. Symbols $x_1$ and $x_2$ are mutually independent and distributed according to their probability mass function (pmf) $P(x_i)$. They are also properly normalized such that $\mathrm{E}\{|x_i|^2\}=P$, where $P$ is the transmitted power per user. The parameter $\gamma_2$ is complex-valued and models the power imbalance and the phase shift between the two signals. The random variable $w$ is assumed complex and Gaussian. We point out that this is an approximation exploited only by the receiver, while in the actual channel the interference is clearly generated as in~\eqref{eq:model}. The MUD$\times$2 receiver has a computational complexity which is proportional to the product $M^{(1)}  M^{(2)}$~\cite{Ve98}.

We are interested here in the computation of the maximum achievable rate $R_1$ for ``User 1'' when ``User~2'' adopts a fixed rate $R_2$, and the MUD$\times$2 is employed. Rates are defined as $R_i=r^{(i)}\log_2\left(M^{(i)}\right)$, where $r^{(i)}$ is the rate of the adopted binary code. The rates of the other $K-2$ interferers do not affect our results since, at the receiver, they are treated just as noise. This problem is quite different with respect to the case of the MAC discussed in~\cite{CoTh06} where both rates $(R_1,R_2)$ are jointly selected. In this case, in fact, the information coming from the second transmitter is not intended for ``User 1''. Hence, the rate $R_2$ is fixed and data of ``User~2'' is discarded after detection.

The IR for ``User~1'' in the considered scenario is given by Theorem~\ref{t:ThIR}, whose proof is based on the following two lemmas. An alternative proof can be found in~\cite{NaBaLeKa14}. The first one defines the maximum rate $I_A$ achievable by ``User 1'' when ``User 2'' can be perfectly decoded.
	\newtheorem{Lemma}{Lemma}
	\begin{Lemma}\label{l:lemma1}
		For a fixed rate $R_2$, the rate
		\begin{equation}\nonumber
			I_{A}= \begin{cases}
				I(x_1;y|x_2) & \mathrm{if\quad} R_2 < I(x_2;y) \\
				 I(x_1,x_2;y)-R_2 & \mathrm{if\quad} I(x_2;y) \leq R_2<I(x_2;y|x_1) \\
			         0                                          & \mathrm{if\quad} R_2\geq I(x_2;y|x_1)
			       \end{cases}
		\end{equation}
is achievable by ``User 1'' and is not a continuous function of $P/N$. Namely, a cut-off signal-to-noise ratio $\mathrm{SNR}_c$  exists such that $I_A=0$ for $P/N \leq \mathrm{SNR}_c$ and $I_A>0$ for $P/N > \mathrm{SNR}_c$ with a discontinuity.
	\end{Lemma}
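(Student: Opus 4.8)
The plan is to recognize that Lemma~\ref{l:lemma1} is, at heart, a reformulation of the two-user multiple-access channel (MAC) capacity region for the channel in~\eqref{eq:mac} with the input pmfs held fixed. Requiring that ``User~2'' be perfectly decoded while ``User~1'' is also decoded forces the pair $(R_1,R_2)$ to lie inside the MAC region, so the first step is to invoke the MAC coding theorem~\cite{CoTh06} (achievability and converse) to assert that the set of jointly reliable rate pairs for the given constellations is exactly the pentagon
\[
  \mathcal{R} = \left\{ (R_1,R_2): R_1 \le I(x_1;y|x_2),\ R_2 \le I(x_2;y|x_1),\ R_1+R_2 \le I(x_1,x_2;y) \right\}.
\]
Since the data of ``User~2'' are discarded after detection, $I_A$ is simply the largest $R_1$ for which $(R_1,R_2)\in\mathcal{R}$ at the prescribed $R_2$; this converts the lemma into a one-dimensional maximization over a polytope.

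Carrying out that maximization is the second step. For fixed $R_2$ the pentagon is nonempty in the $R_1$-direction precisely when $R_2 \le I(x_2;y|x_1)$, and in that case $I_A = \min\!\bigl( I(x_1;y|x_2),\ I(x_1,x_2;y)-R_2 \bigr)$, whereas $I_A=0$ when $R_2 > I(x_2;y|x_1)$ because the feasible set collapses. The three branches then follow from the chain rule $I(x_1,x_2;y)=I(x_2;y)+I(x_1;y|x_2)$, which shows that $I(x_1;y|x_2)\le I(x_1,x_2;y)-R_2$ if and only if $R_2\le I(x_2;y)$: the individual constraint on $R_1$ is active below $I(x_2;y)$, the sum-rate constraint is active between $I(x_2;y)$ and $I(x_2;y|x_1)$, and the region is empty above. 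Operationally, the first branch is successive interference cancellation (decode ``User~2'' treating ``User~1'' as noise, subtract it, then decode ``User~1'' over an interference-free channel), while the middle branch corresponds to operating on the dominant face of the pentagon via joint decoding. The step I expect to require the most care is this achievability on the dominant face, since a single decoding order reaches only the two corner points; the interior point $\bigl(I(x_1,x_2;y)-R_2,\,R_2\bigr)$ must be reached by time-sharing between the two cancellation orders (or by genuine joint typicality decoding), and one has to check that ``User~2'' is indeed reliably recovered throughout, so that the hypothesis of the lemma is actually met.

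The final step is the discontinuity. For fixed finite constellations the quantities $I(x_2;y)$, $I(x_2;y|x_1)$, $I(x_1,x_2;y)$ and $I(x_1;y)$ are continuous and strictly increasing in $P/N$, so I would define $\mathrm{SNR}_c$ as the value at which $I(x_2;y|x_1)=R_2$. For $P/N\le\mathrm{SNR}_c$ we are in the third branch and $I_A=0$. Just above $\mathrm{SNR}_c$ we fall in the middle branch, because the independence of $x_1$ and $x_2$ gives $I(x_2;y|x_1)-I(x_2;y)=I(x_1;x_2\mid y)>0$, so $I(x_2;y)<R_2$ at the crossing. There the value is $I(x_1,x_2;y)-R_2$, and evaluating it at $R_2=I(x_2;y|x_1)$ via the chain rule yields
\[
  \lim_{P/N\,\downarrow\,\mathrm{SNR}_c} I_A = I(x_1,x_2;y)-I(x_2;y|x_1) = I(x_1;y)\big|_{\mathrm{SNR}_c} > 0 .
\]
Hence $I_A$ jumps from $0$ to the strictly positive value $I(x_1;y)$ as $P/N$ crosses $\mathrm{SNR}_c$, establishing both the non-continuity and the existence of the cut-off claimed in the statement.
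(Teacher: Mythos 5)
Your proposal is correct and follows essentially the same route as the paper: it invokes the MAC rate region from~\cite{CoTh06}, maximizes $R_1$ over the pentagon at the fixed $R_2$ using the chain rule $I(x_1,x_2;y)=I(x_2;y)+I(x_1;y|x_2)$ to sort out which constraint is active, and locates $\mathrm{SNR}_c$ via the monotonicity of $I(x_2;y|x_1)$ in $P/N$. Your extra observations (the explicit limit value $I(x_1;y)$ at the jump, and the need for time-sharing or joint decoding to reach the dominant face) are refinements the paper leaves implicit, the first of which it records separately as Lemma~\ref{l:lemma2}.
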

	\begin{proof}
		In \cite{CoTh06}, it is shown that the achievable region for the MAC is given by the region of points ($R_1,R_2$) such that
		\begin{eqnarray}
			R_1 & <& I(x_1;y|x_2)=I_1 \label{eq:c1} \\
			R_2 & <& I(x_2;y|x_1)=I_2 \\
			R_1+R_2 & <& I(x_1,x_2;y)=I_{\rm J}\label{eq:c3}  \,.
		\end{eqnarray}
An example of such a region is shown in Figure~\ref{fig:region}.
\begin{figure}
	\begin{center}
		\includegraphics[width=0.7\columnwidth]{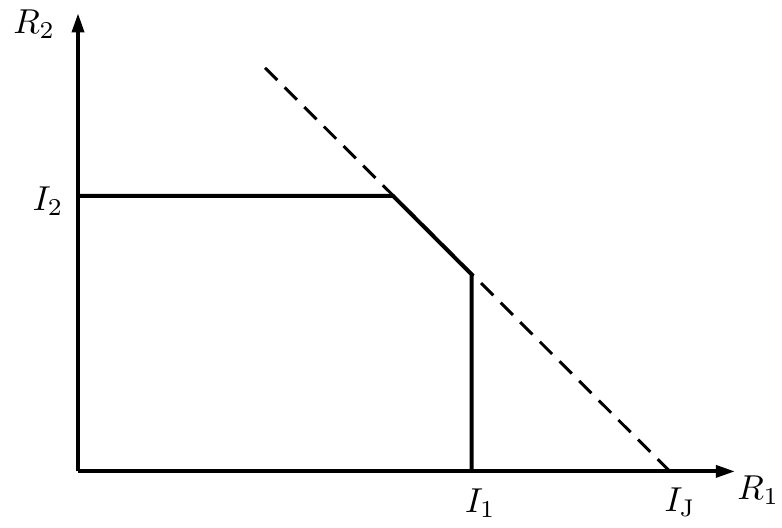}
		\caption{Example of MAC capacity region.}\label{fig:region}
	\end{center}
\end{figure}
If $R_2$ is constrained to a given value, we derive from (\ref{eq:c1}) and (\ref{eq:c3}) that 
$$R_1< \min\{ I(x_1;y|x_2),I(x_1,x_2;y)-R_2\}$$ 
when $R_2<I(x_2;y|x_1)$. The first term is lower when 
$$R_2 < I(x_1,x_2;y) -I(x_1;y|x_2) = I(x_2;y)\,.$$ 
Thus, $I_{A}$ is an achievable rate for ``User 1''.
		
We now prove that $I_A$ has a cut-off rate. Since, $I(x_2;y|x_1)$ is a non-decreasing function of $P/N$~\cite{GuShVe05}, there exists $\mathrm{SNR}_c$ such that $I(x_2;y|x_1)=R_2$, and hence $$I_A(\mathrm{SNR}_c)=0.$$ On the other hand, for a small $\varepsilon>0$,  it holds $R_2=I(x_2;y|x_1)-\delta$ where $\delta>0$. It follows that 
$I(x_1;y|x_2)>I(x_1,x_2;y)-R_2$. Thus 
$$I_A(\mathrm{SNR}_c+\varepsilon)=I(x_1,x_2;y)-R_2> I(x_1;y)>0$$ 
for $\varepsilon\rightarrow0^+$.
\end{proof}
\textit{Discussion}: The proof of the lemma can be done graphically by considering the intersection of the achievable rate region with a horizontal line at height $R_2$. 

When $R_2>I(x_2;y|x_1)$ clearly the rate of ``User 2'' cannot be achieved. However, we also have to account for this case, and therefore we consider also the achievable rate $I(x_1;y)$, which is the relevant rate when ``User 2'' is just considered as interference. In this case, the receiver exploits the statistical knowledge of the signal $s_2(t)$ but does not attempt to recover the relevant information. Particularly, the receiver does not include the decoder for ``User~2''.
\begin{Lemma}\label{l:lemma2}
		The rate $I_S(P/N)=I(x_1;y)$ as a function of $P/N$ is always greater than {\rm 0} and satisfies
		\begin{eqnarray*}
			I_S(\mathrm{SNR}_c) & = & \lim_{\varepsilon \rightarrow0^+} I_A(\mathrm{SNR}_c+\varepsilon) \\
			I_S(\mathrm{SNR}_c+\delta) & < & I_A(\mathrm{SNR}_c+\delta)
		\end{eqnarray*}
		for any $\delta>0$.

	\end{Lemma}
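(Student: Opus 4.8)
The plan is to reduce all three assertions to the chain rule for mutual information, the independence of $x_1$ and $x_2$, and the continuity of the relevant mutual informations as functions of $P/N$. The only genuinely new ingredient beyond Lemma~\ref{l:lemma1} is the comparison of $I(x_1;y)$ with the conditioned quantity $I(x_1;y|x_2)$, which I would handle through the identity
\begin{equation}\nonumber
I(x_1;y|x_2)=I(x_1;y)+I(x_1;x_2|y),
\end{equation}
valid whenever $x_1$ and $x_2$ are independent (it follows by expanding $I(x_1;y,x_2)$ in the two possible chain-rule orders and using $I(x_1;x_2)=0$). For the positivity claim I would simply note that in the channel~\eqref{eq:mac} the output $y$ contains $x_1$ through the unit-gain term while $w$ has finite power $N$, so $x_1$ and $y$ are not independent and $I_S(P/N)=I(x_1;y)=H(x_1)-H(x_1|y)>0$ for every finite $P/N$ and every non-degenerate pmf $P(x_1)$.

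For the limit identity I would start from the value of $I_A$ just above the cut-off. Applying the same identity to $x_2$ gives $I(x_2;y|x_1)-I(x_2;y)=I(x_1;x_2|y)\ge 0$, so at $\mathrm{SNR}_c$, where $R_2=I(x_2;y|x_1)$, one has $I(x_2;y)<R_2$ strictly (the gap being $I(x_1;x_2|y)>0$). By continuity this inequality persists for small $\varepsilon>0$, placing the system on the middle branch $I_A=I(x_1,x_2;y)-R_2$. Since $I(x_1,x_2;y)$ and $I(x_2;y|x_1)$ are continuous in $P/N$, taking $\varepsilon\to 0^+$ yields
\begin{equation}\nonumber
\lim_{\varepsilon\to0^+}I_A(\mathrm{SNR}_c+\varepsilon)=I(x_1,x_2;y)\big|_{\mathrm{SNR}_c}-I(x_2;y|x_1)\big|_{\mathrm{SNR}_c}=I(x_1;y)\big|_{\mathrm{SNR}_c}=I_S(\mathrm{SNR}_c),
\end{equation}
where the middle step is the chain rule $I(x_1,x_2;y)=I(x_2;y|x_1)+I(x_1;y)$. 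This establishes the first equality.

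For the strict inequality, I would fix $\delta>0$. Because $I(x_2;y|x_1)$ is strictly increasing in $P/N$ and equals $R_2$ at $\mathrm{SNR}_c$, at $\mathrm{SNR}_c+\delta$ we have $R_2<I(x_2;y|x_1)$, so $I_A$ takes one of its two positive values. If $R_2<I(x_2;y)$, then $I_A=I(x_1;y|x_2)$ and the identity above gives $I_A-I_S=I(x_1;x_2|y)>0$. If instead $I(x_2;y)\le R_2<I(x_2;y|x_1)$, then $I_A=I(x_1,x_2;y)-R_2$ and the chain rule gives $I_A-I_S=I(x_2;y|x_1)-R_2>0$. In both regimes $I_S(\mathrm{SNR}_c+\delta)<I_A(\mathrm{SNR}_c+\delta)$.

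The step I expect to be the main obstacle is making the two strictness claims fully rigorous: the continuity of the mutual informations (which legitimizes the right limit) and, more importantly, the strict positivity $I(x_1;x_2|y)>0$. The latter asserts that observing the superposition $x_1+\gamma_2 x_2+w$ at finite SNR genuinely couples the two a priori independent inputs; I would argue this directly from the fact that the posterior $P(x_1,x_2\mid y)$ does not factor when the noise density is non-degenerate and the inputs are non-constant. The same non-degeneracy also guarantees the strict gap $I(x_2;y)<R_2$ at the cut-off used in the limit argument, and the strict increase of $I(x_2;y|x_1)$ in $P/N$ — which can be invoked via the I-MMSE relation — secures the strict inequality $I(x_2;y|x_1)-R_2>0$ for every $\delta>0$.
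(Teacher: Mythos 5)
Your proposal is correct and follows essentially the same route as the paper's (much terser) proof, which rests on exactly the two comparisons $I(x_1;y)\leq I(x_1;y|x_2)$ and $I(x_1;y)\leq I(x_1,x_2;y)$ that you derive via the chain rule, combined with the branch structure of $I_A$ from Lemma~\ref{l:lemma1}. You additionally make explicit the continuity and strictness arguments (via $I(x_1;x_2|y)>0$ and $R_2<I(x_2;y|x_1)$ above the cut-off) that the paper dismisses as straightforward.
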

	\begin{proof}
		The proof is straightforward. It can be done by observing that $I(x_1;y)\leq I(x_1;y|x_2)$ and that $I(x_1;y)\leq I(x_1,x_2;y)$.
	\end{proof}

The computation of the IRs $I(x_1;y|x_2)$, $I(x_2;y|x_1)$, $I(x_1,x_2;y)$, $I(x_1;y)$ in the presence of interferers with $i>2$ not accounted for at the receiver can be performed by using the achievable lower bound based on mismatched detection~\cite{MeKaLaSh94}. Having defined $I_A$ and $I_S$ as the maximum rates achievable by ``User~1'' when the signal for the other user can be perfectly decoded, or not, we can now compute the IR for ``User~1'' by means of the following theorem.
	
	\newtheorem{Theorem}{Theorem}
	\begin{Theorem}\label{t:ThIR}
		The achievable information rate for a single user on the two users multiple access channel, for a fixed rate $R_2$, is given by
		\begin{equation}\label{eq:thm1}
			R_1 < \max\{I_S,I_A\}\, ,
		\end{equation}				
		and is a continuous function of $P/N$.
	\end{Theorem}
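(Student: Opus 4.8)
The plan is to prove the two assertions separately: that $\max\{I_S,I_A\}$ is achievable, and that it is a continuous function of $P/N$. Achievability is immediate from the two lemmas. Lemma~\ref{l:lemma1} shows that the rate $I_A$ is achievable by the detector that jointly decodes both users and then discards the data of ``User~2'', while Lemma~\ref{l:lemma2} shows that the rate $I_S=I(x_1;y)$ is achievable by the detector that treats $s_2(t)$ merely as additional noise. Since both detectors operate on the same received signal and the same input pmf $P(x_1)$, the receiver is free to run whichever one yields the larger rate; hence for any target $R_1<\max\{I_S,I_A\}$ one of the two strategies already guarantees reliable decoding, and $\max\{I_S,I_A\}$ is achievable.

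The substantive part is continuity. First I would recall that $I(x_1;y)$, $I(x_1;y|x_2)$ and $I(x_1,x_2;y)$ are each continuous in $P/N$ for fixed input distributions, which is the natural companion to the monotonicity already invoked in Lemma~\ref{l:lemma1}. I would then partition the $P/N$ axis at the cut-off value $\mathrm{SNR}_c$ of Lemma~\ref{l:lemma1} and treat each piece in turn. For $P/N<\mathrm{SNR}_c$ we have $I_A=0$ while $I_S>0$ always, so $\max\{I_S,I_A\}=I_S$, which is continuous. For $P/N>\mathrm{SNR}_c$ the second relation of Lemma~\ref{l:lemma2} gives $I_S<I_A$ for every $\delta>0$, so $\max\{I_S,I_A\}=I_A$; on this region $I_A$ equals either $I(x_1;y|x_2)$ or $I(x_1,x_2;y)-R_2$, and by the chain rule the two branches agree (both equal $I(x_1;y|x_2)$) at the crossover $R_2=I(x_2;y)$, so $I_A$ is continuous there. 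Thus $\max\{I_S,I_A\}$ is continuous on each open half-line.

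The main obstacle, and the only place where the two lemmas are genuinely used together, is continuity at the cut-off point itself. At $P/N=\mathrm{SNR}_c$ we have $I_A=0$, so the value of the maximum is $I_S(\mathrm{SNR}_c)$. For the right limit I would use that for small $\varepsilon>0$ the maximum equals $I_A(\mathrm{SNR}_c+\varepsilon)$, whose limit as $\varepsilon\to0^+$ is exactly $I_S(\mathrm{SNR}_c)$ by the first relation of Lemma~\ref{l:lemma2}; for the left limit I would use continuity of $I_S$. Both one-sided limits therefore coincide with the value $I_S(\mathrm{SNR}_c)$, which shows that the jump of $I_A$ across $\mathrm{SNR}_c$ is precisely filled in by $I_S$. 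Intuitively, this is the statement that switching from the ``decode-the-interferer'' strategy to the ``treat-as-noise'' strategy exactly at the cut-off incurs no rate penalty, and it is what removes the discontinuity exhibited by $I_A$ alone.
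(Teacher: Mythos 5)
Your proof is correct and follows the same route as the paper, which simply appeals to Lemmas~\ref{l:lemma1} and~\ref{l:lemma2} (plus a graphical argument from the $R_1$--$R_2$ plane) without spelling out the details. You have merely made explicit what the paper leaves implicit: achievability by choosing the better of the two detection strategies, and continuity at $\mathrm{SNR}_c$ via the matching relations $I_S(\mathrm{SNR}_c)=\lim_{\varepsilon\to 0^+}I_A(\mathrm{SNR}_c+\varepsilon)$ and $I_S<I_A$ above the cut-off.
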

	\begin{proof}
	 The proof is based on Lemma~\ref{l:lemma1} and \ref{l:lemma2}. In fact, $I_A$ and $I_S$ are the maximum rates achievable by ``User 1'' when the signal for ``User 2'' can be perfectly decoded, or not. An alternative graphical proof can be derived from Figure~\ref{fig:region_thm}, which plots the rate achievable by ``User~1'' as a function of $R_2$, for a generic fixed value of $P/N$. We clearly see that inequality~\eqref{eq:thm1} is satisfied.
	\end{proof}
\begin{figure}
	\begin{center}
		\includegraphics[width=0.7\columnwidth]{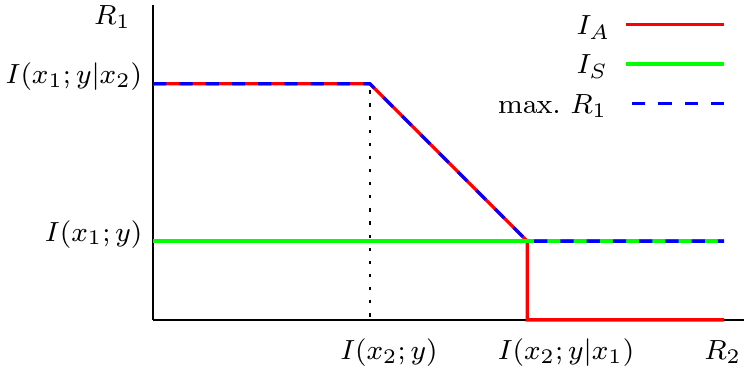}
		\caption{Graphical proof of Theorem~\ref{t:ThIR}.}\label{fig:region_thm}
	\end{center}
\end{figure}

\begin{ex} For Gaussian symbols and $K=2$, we obtain that 
\begin{equation}\nonumber
R_1 < \begin{cases}
		\mathcal{G}\left(\frac{P}{N} \right) &  \mathrm{if\quad}R_2<\mathcal{G}\left(\frac{P|\gamma_2|^2}{N+P} \right) \\
	 \mathcal{G}\left(\frac{P(1+|\gamma_2|^2)}{N} \right)\!-\!R_2 	& \mathrm{if\quad} \mathcal{G}\left(\frac{P|\gamma_2|^2}{N+P} \right) \leq R_2<\mathcal{G}\left(\frac{P|\gamma_2|^2}{N} \right)\\
	\mathcal{G}\left(\frac{P}{N+P|\gamma_2|^2} \right) & \mathrm{if\quad}R_2\geq \mathcal{G}\left(\frac{P|\gamma_2|^2}{N} \right)\,,
      \end{cases}
\end{equation}
where $\mathcal{G}(x)=\log_2(1+x)$. All curves are shown in Figure \ref{fig:GaussCurve}, for the case of $|\gamma_2|=0.79$, $R_2=1/2$, and the overall bound is given by the red curve. We can see from the figure that this bound is clearly continuous.
\end{ex}

\begin{figure}
	\includegraphics[width=1.0\columnwidth]{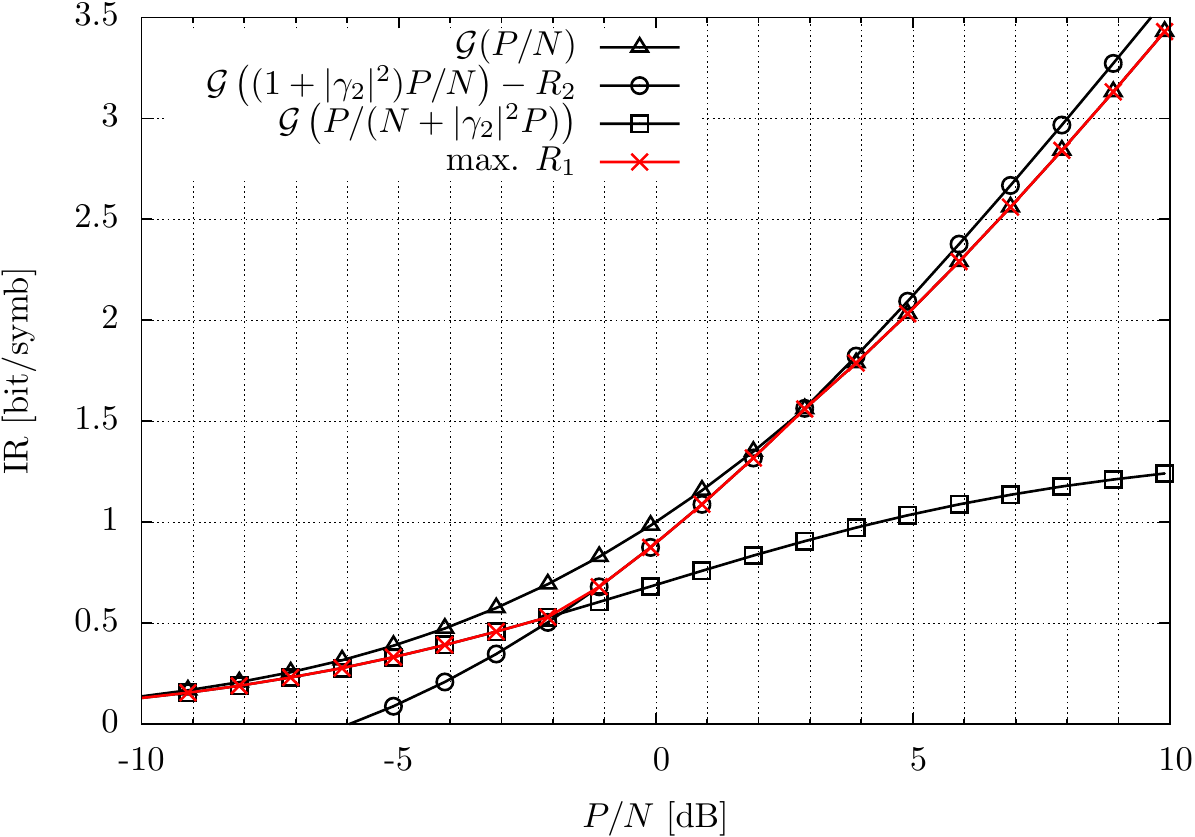}
	\caption{Maximum rate achievable by ``User 1'', for $K=2$, Gaussian symbols, and $R_2=1/2$.}\label{fig:GaussCurve}
\end{figure}

When a SUD is employed at the terminal, the theoretic analysis can be based on the following discrete-time model 
	\begin{equation}
		y = x_1+ w \, , \label{eq:channel_SUD}
	\end{equation}
where $w$ includes the thermal noise and the interferers that the receiver ignores. Note that we again use mismatched detection~\cite{MeKaLaSh94} here, i.e., in the Montecarlo average to compute $I(x_1;y)$ we use the received samples $y$ coming from the real channel whereas the detector is assumed to be designed for the auxiliary channel model~(\ref{eq:channel_SUD}).
As known, the complexity of the SUD is much lower than that of the multiuser receiver, and is proportional to $M^{(1)}$. The computation of the IR $I(x_1;y)$ allows us to select the maximum rate for ``User~1'' when the co-channel interference is not accounted for.

We now consider \textbf{strategy~2} and, without loss of generality, we consider the fraction $\alpha$ of time when both signals $s_1(t)$ and $s_2(t)$ are used to send information to ``User 1''. The receiver is based on the channel model~(\ref{eq:mac}), but now the rate of signal $s_2(t)$ is not fixed. Since $s_1(t)$ and $s_2(t)$ are independent, we are exactly in the case of the MAC and, by properly selecting the rate of the two signals, any point of the capacity region can be achieved~\cite{CoTh06}. Clearly, since $s_1(t)$ and $s_2(t)$ are now both intended for the same user, we are interested in selecting the two rates in such a way that the sum-rate $I(x_1,x_2;y)$ is maximized. 

In \textbf{strategy~3}, due to the adoption of the Alamouti precoding, two consecutive samples at the terminal of ``User 1'' are~\cite{Al98}
\begin{eqnarray*}
	y_{{\rm A},1} & = & x_1 + \gamma_2 x_2 + w_{{\rm A},1} \\
	y_{{\rm A},2} & = & -x_2^* + \gamma_2 x_1^* + w_{{\rm A},2} \, ,
\end{eqnarray*}
where $w_{{\rm A},1},w_{{\rm A},2}$ include independent Gaussian noise samples and the remaining interferers. After the receiver processing~\cite{Al98}, the observable for detection is
\begin{equation}
	\tilde{y}_{{\rm A},i}= \sqrt{1+|\gamma_2|^2}x_i + \tilde{w}_{{\rm A},i} \quad i=1,2 \,, \label{eq:al_eq}
\end{equation}	 	
and is still a sufficient statistic for detection. The noise samples $\tilde{w}_{{\rm A},i}$ are statistically equivalent to $w_{{\rm A},i}$. The information carried by $\tilde{y}_{{\rm A},2}$ is discarded and the IR for ``User~1'' is that of an interference free channel with SNR $(1+|\gamma_2|^2)P/N$, divided by 2 for the reason already explained.

\section{EXIT Chart Analysis}\label{s:exit_C}
In this section, we analyze the convergence behaviour of the considered strategies based on multiuser detection by means of an EXIT chart analysis~\cite{te01}. The aim is to evaluate the effectiveness of iterative decoding/detection schemes in \textbf{strategy 1} and \textbf{strategy 2}, and to design novel practical systems with performance close to the theoretical limits. 

In the following description, we assume the presence of only two independent signals, those processed by the receiver of ``User~1'', but the results in Section~\ref{s:num_res} are generated according to the general model~(\ref{eq:model}). 

Each transmitted signal is obtained through a concatenation of a code with a modulator through a bit interleaver. The information data of signal $i$ is encoded by encoder $\Cc_i$ of rate $r^{(i)}$ into codeword $\vv_i$, which is interleaved and mapped through a modulator $\Mc_i$ onto a sequence of $M^{(i)}$-ary symbols $\xv_i$. Here the channel model is the vectorial extension of the model~(\ref{eq:mac}) which allows us to consider sequences of symbols.  The iterative decoding/detection scheme consists of a multiuser detection module $\Cc^{-1}_\mathrm{MU}$, and 2 \textit{a posteriori} probability decoders $\Cc^{-1}_1$ and $\Cc^{-1}_{2}$ matched to encoders $\Cc_1$ and $\Cc_2$ of the two transponders. The described system is reported in Figure~\ref{fig:iterative_scheme}.

The soft-input soft-output (SISO) MUD exchanges soft information with the two decoders $\Cc^{-1}_1$ and $\Cc^{-1}_2$, in an iterative fashion. More generally, the detector and the decoders can also be composed by SISO blocks. In this work, we focus on low-density parity-check (LDPC) codes, whose decoder is composed of sets of variable and check nodes (the variable-node decoder (VND) and check-node decoder (CND)). Iterative decoding is performed by passing messages between variable and check nodes. 

The global iterative detection/decoding process can then be tracked using a multi-dimensional EXIT chart~\cite{te01}. Alternatively, the EXIT functions of the constituent decoders and of the MUD can be properly  combined and projected into a two-dimensional chart~\cite{Bra05IT}. Similar to a system composed by only two SISO blocks, the convergence threshold of our system can be visualized as a tunnel between the two curves in the projected EXIT chart.

The system in Figure~\ref{fig:iterative_scheme} can represent both \textbf{strategy~1} and \textbf{strategy~2}. We recall that in the first scenario the information to recover is conveyed by signal~1 only, while the rate of the other signal is fixed. Our design will be thus aimed at finding a good code $\Cc_1$, while the code for the other signal cannot be changed and will be chosen among those foreseen by the DVB-S2(X) standard. For \textbf{strategy~2}, the scheme in Figure~\ref{fig:iterative_scheme} is representative of the fraction of time in which both signals are carrying information for ``User~1''. In this case, we assume to have the freedom to choose the code of the two signals and also to apply a joint bit mapping, as we will see in Section~\ref{sec:joint mapping}. 

\begin{comment}
Moreover, if we assume that $|\gamma_2|=1$, the convergence analysis strongly simplifies since it is reasonable to assume that the two transponders employ the same code. In this special case, we have that
$$I_{E^{\Cc_1}_{\vv_1}}=I_{E^{\Cc_2}_{\vv_2}}$$
and the EXIT chart can be directly visualized in two dimensions, by plotting the MI curve of the MUD and of the decoder.
\end{comment}
\begin{figure}
	\begin{center}
		\includegraphics[width=1.0\columnwidth]{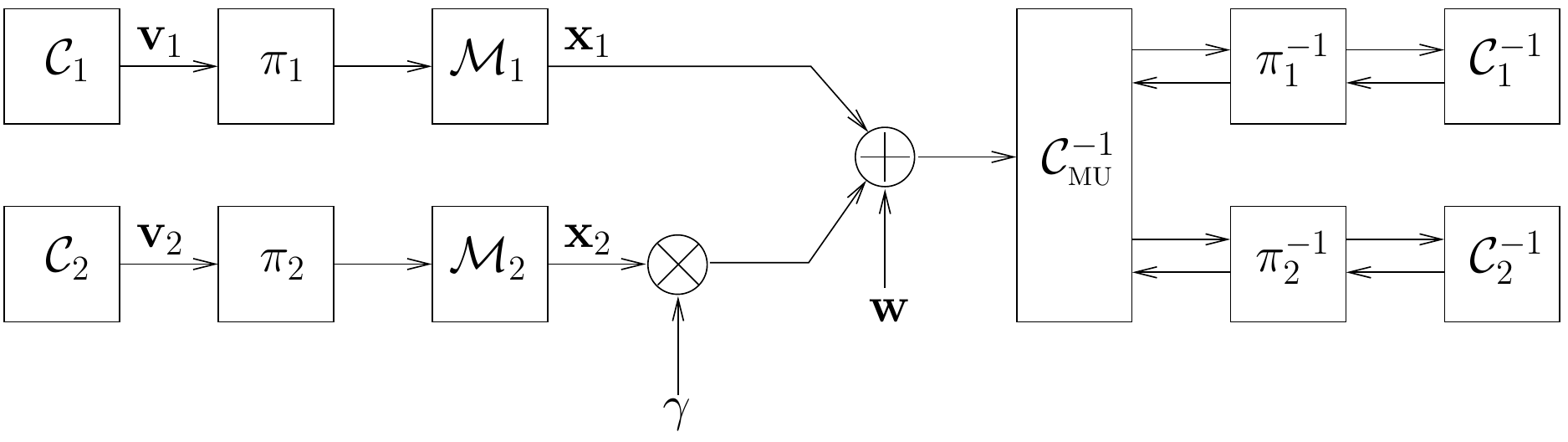}
		\vspace{-0.5cm}
		\caption{Block diagram of the considered system.}\label{fig:iterative_scheme}
		\vspace{-0.5cm}
	\end{center}
\end{figure}

\section{Numerical Results}\label{s:num_res}
In this section, we will first compare the three described strategies in terms of IR under different conditions. We will then try to approach the information-theoretic results with practical modulation and coding formats. 

\subsection{Information-Theoretic Analysis}
We assume as reference system the DVB-S2 standard~\cite{DVB-S2-TR}. We choose a 2-color frequency reuse scheme  to generate a high co-channel interference. We assume that ``User 1'' is located close the  edge of the coverage area of its beam. To identify the interference, we define the signal-to-interference power ratio as
$$
\lambda_i=|\gamma_1|^2/|\gamma_i|^2\, ,
$$
and consider three realistic cases which have a different power profile, and are listed in Table~\ref{t:cases}. These distributions correspond to 3 different positions for ``User 1'' and are typical of the forward link of a multibeam broadband satellite system with 2-color frequency reuse.

\begin{table}  \caption{Interference profiles corresponding to a 2-color frequency reuse.}   \label{t:cases}
% \vspace{-0.3cm}
\begin{center}
\begin{tabular}{|c|c|c|c|c|c|}
  \hline 
  Case & $\lambda_2$ & $\lambda_3$ & $\lambda_4$ & $\lambda_5$ & $\lambda_6$  \\ 
  \hline \hline
   1 &  0 dB & 25 dB & 25 dB & 27 dB & 30 dB \\ 
  \hline 
   2  & 2 dB & 26 dB & 26 dB & 27 dB & 30 dB\\ 
  \hline
   3 & 4 dB & 27 dB & 26 dB & 27 dB & 30 dB \\ 
  \hline 
  \end{tabular}
\end{center}
\end{table}

For the first two strategies, we assume that ``User~1'' adopts a QPSK modulation, therefore the signal with $i=1$ in \textbf{strategy~1}, and signals 1 and 2 in \textbf{strategy~2} use a QPSK. This is reasonable since we are considering the presence of a strong interfering signal and thus a modulation with a low cardinality will be selected. ``User~1'', in \textbf{strategy 3}, adopts a 16APSK modulation so that we have the same receiver complexity as in \textbf{strategy~2}. In the case of \textbf{strategy~1}, the performance is heavily affected by the rate of ``User~2'': in order to fix the rate of signal~2, we consider the ModCod distribution shown in Figure~\ref{fig:histogram}. The other signals adopt the following modulation formats in all strategies: 8PSK for signals with $i=3,4$ and 6, and 16APSK for the signal with $i=5$ (although only their power really matters). In the case of \textbf{strategy~2}, $\alpha=0.5$ is assumed.

\begin{figure}
	\begin{center}
		\includegraphics[width=1.0\columnwidth]{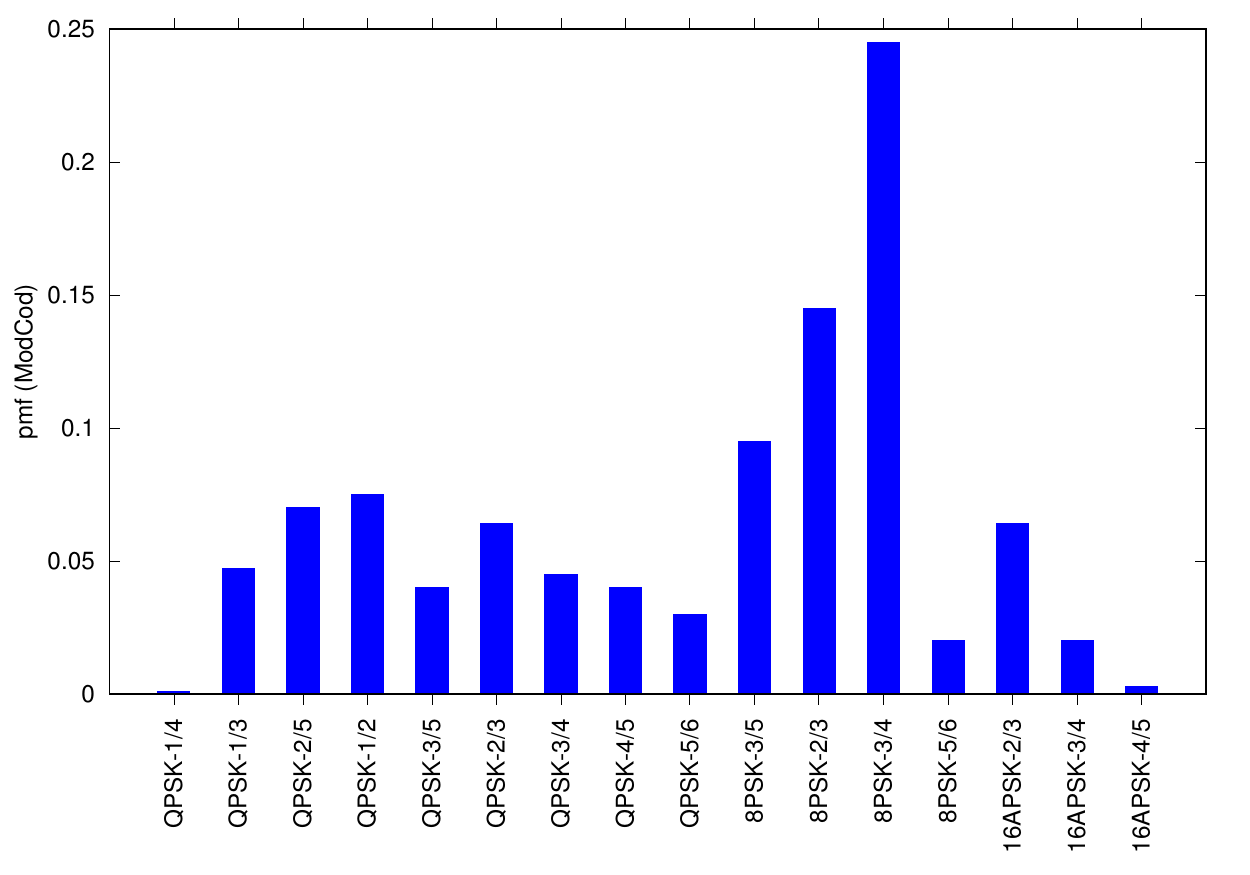}
		\vspace{-0.8cm}
		\caption{Typical ModCod distribution.}\label{fig:histogram}
	\end{center}
\end{figure}
Figures~\ref{fig:IR1}--\ref{fig:IR3} show the IR, measured in bit per symbol, of ``User 1'' as a function of $P/N$ for the three considered interference profiles listed in Table~\ref{t:cases}. In the case of \textbf{strategy~1}, we evaluate both the IR achievable by a SUD and that achievable by the MUD$\times$2 algorithm, and the reported curves are obtained by computing the IRs when ``User~2'' adopts the ModCods in Figure~\ref{fig:histogram} and then averaging according to their distribution.
\begin{figure}
	\begin{center}
		\includegraphics[width=1.0\columnwidth]{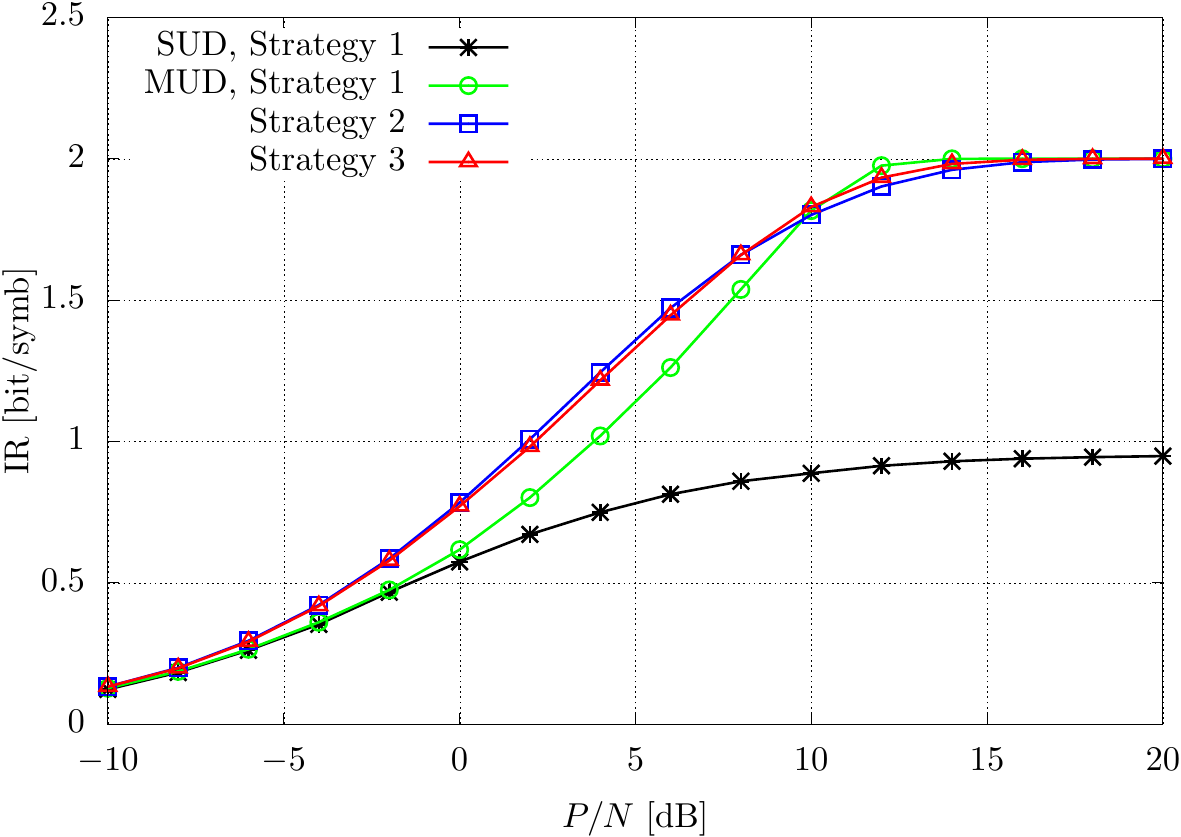}
		\caption{Information rate of ``User 1'' for case 1.}\label{fig:IR1}
	\end{center}
\end{figure}

\begin{figure}
	\begin{center}
		\includegraphics[width=1.0\columnwidth]{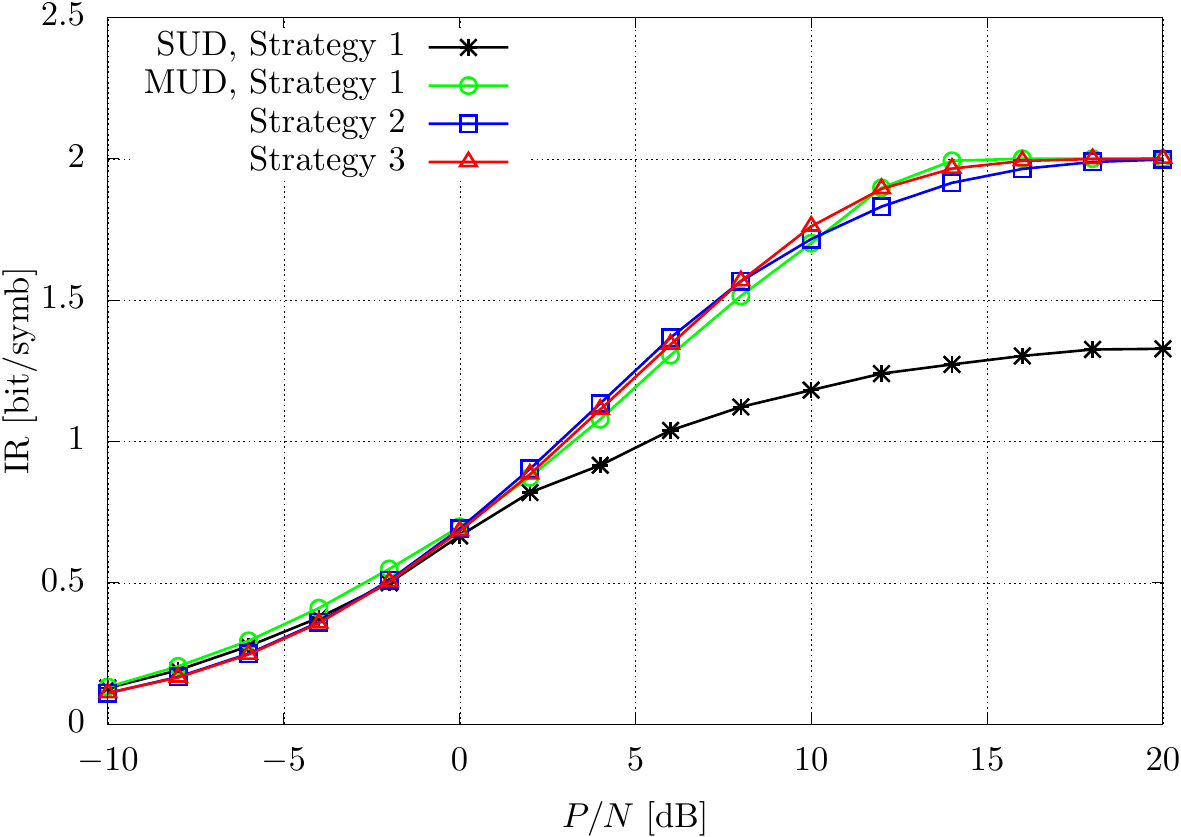}
		\caption{Information rate of ``User 1'' for case 2.}\label{fig:IR2}
	\end{center}
\end{figure}

\begin{figure}
	\begin{center}
		\includegraphics[width=1.0\columnwidth]{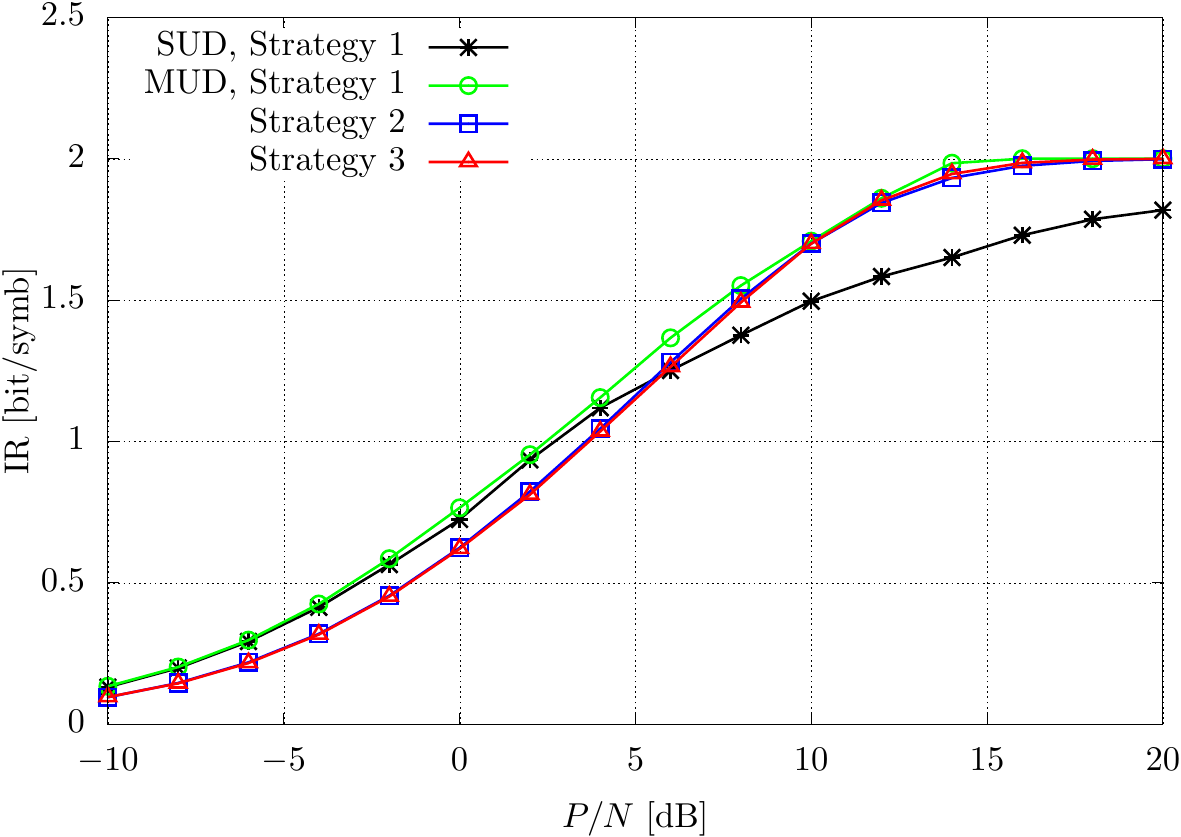}
		\caption{Information rate of ``User 1'' for case 3.}\label{fig:IR3}
	\end{center}
\end{figure}

\begin{figure}
	\begin{center}
		\includegraphics[width=1.0\columnwidth]{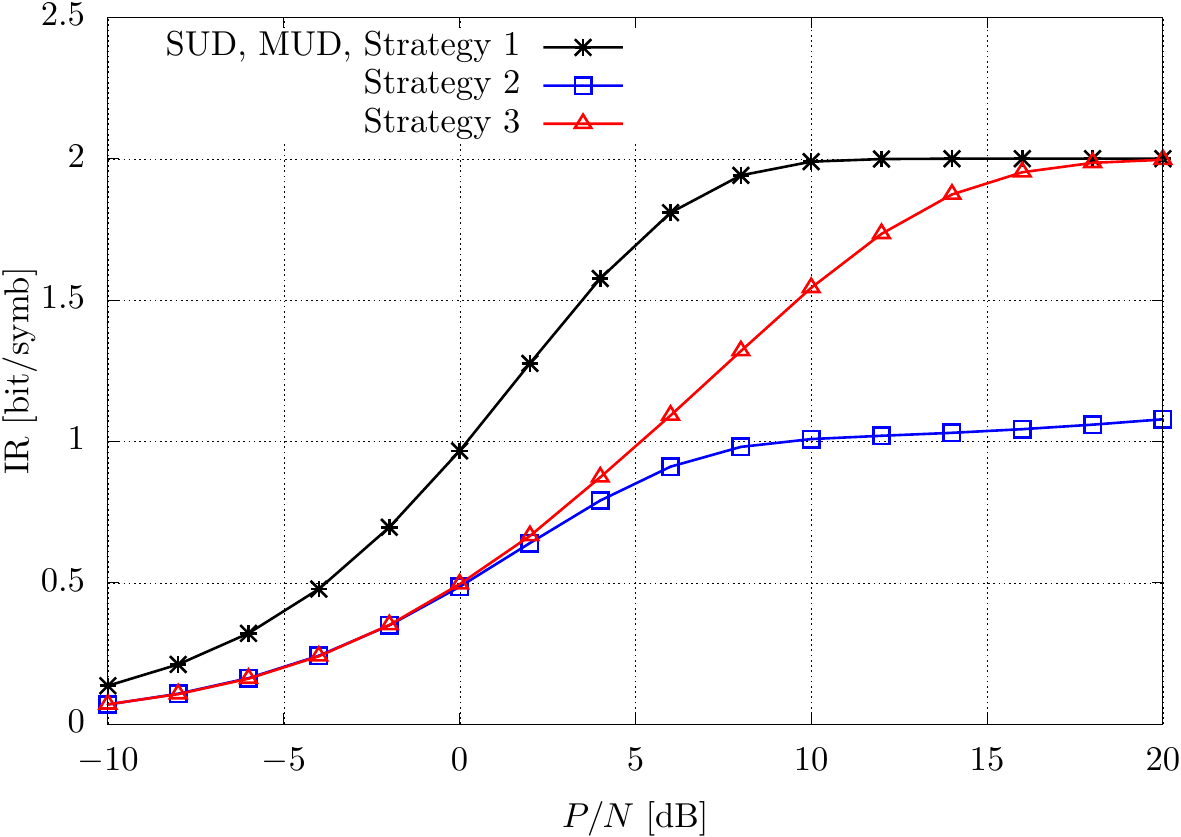}
		\caption{Information rate of ``User 1'' when it is located at the center of the beam (case 4).}\label{fig:IR4}
	\end{center}
\end{figure}

Our results show that we cannot identify the strategy which universally achieves the best performance. In particular, the figures show that ``User~1'' has the best IR in \textbf{strategy~2} and \textbf{strategy~3} in the first case,  where the interference of the second signal is very strong, while in the third case it has higher IR in \textbf{strategy~1} for low-to-medium SNR values. In the second case the three strategies offer similar performance when the MUD is applied in \textbf{strategy~1}. As expected, in \textbf{strategy~1} the adoption of the MUD gives a better result than the SUD, and this is at the price of an increased complexity. In case~3, the SUD gives very good IRs and hence it is the best choice to compromise between complexity and performance for a large SNR range. 

It is worth noting that, while all the proposed strategies are very effective when ``User~1'' is close to the edge of the beam, this is not always true if the user is located at the center of the beam. In Figure~\ref{fig:IR4}, we report the IR in a case in which the power of the signal $s_2(t)$ is very low. The related signals' power profile is given in Table~\ref{t:case4}. In this case, the best strategy is the classical one, and the IR in the case of \textbf{strategy~2} is highly degraded since half of the data for ``User~1'' cannot be recovered, due to the very low value of $|\gamma_2|$. This fact calls for a performance evaluation at system level.
\begin{table}  \caption{Interference profile when ``User~1'' is at the center of the beam.}   \label{t:case4}
% \vspace{-0.3cm}
\begin{center}
\begin{tabular}{|c|c|c|c|c|c|}
  \hline 
  Case & $\lambda_2$ & $\lambda_3$ & $\lambda_4$ & $\lambda_5$ & $\lambda_6$  \\ 
  \hline \hline
   4 & 27 dB & 27 dB & 26 dB & 27 dB & 30 dB \\ 
  \hline 
  \end{tabular}
\end{center}
\end{table}

\subsection{Code Design}
We now consider practical ModCods for multiuser detection and for the Alamouti precoder, and we focus on the gap between practical and theoretical performance. In the following, we do not consider the SUD of \textbf{strategy~1}. As shown by the information-theoretic analysis, it is not easy to compare the three strategies, since the best strategy depends on the power profile of the interfering signals, the rates of the signals, and the SNR. Figure~\ref{fig:IR} shows the IR in case~1. In \textbf{strategy~1}, the IR curve is no more the average IR with respect to the distribution in Figure~\ref{fig:histogram}, but the signal $s_2(t)$ is assumed to adopt an 8PSK. We first consider ModCods based on the LDPC codes of rate 1/2 and 3/4\footnote{The adoption of these two code rates for ``User~1'' corresponds to IR 1 and 1.5 bit/symbol, respectively.} with length 64800 bits of the DVB-S2 standard, with the related interleavers. In \textbf{strategy~1}, we use the rate 3/4 LDPC code for signal $s_2(t)$, in order to 
simulate the 
most probable ModCod according to the distribution in Figure~\ref{fig:histogram}. In the first two strategies, we consider iterative detection and decoding and allow a maximum of 50 global iterations. The BER results have been computed by means of Monte Carlo simulations and are reported in the IR plane in Figure~\ref{fig:IR} using, as reference, a BER of $10^{-4}$. 

These results show that schemes based on the Alamouti precoding and the codes of the standard have good performance, being the loss with respect to the corresponding IR curve around 1~dB. This is because interference is perfectly removed at the receiver. On the contrary, the loss of practical ModCods with respect to the IR limits is high for both \textbf{strategies 1 and 2}, being about 2 and 4~dB at $\textrm{IR}=1$ and 1.5 bit/symbol, respectively.
This is due to the fact that DVB-S2(X) codes have been optimized for an interference-free scenario.

In the following sections, we try to reduce this loss by redesigning the code of ``User~1''. Furthermore, we propose a bit mapping which is jointly implemented for signals 1 and 2 in \textbf{strategy~2}, where we have greater design freedom since both signals are for ``User~1''. Our design approach is based on EXIT charts: this tool is able to point out the limits of the DVB-S2 based ModCods and provide very useful insights on the code and mapper design. 

\begin{figure}
	\begin{center}
		\includegraphics[width=1.0\columnwidth]{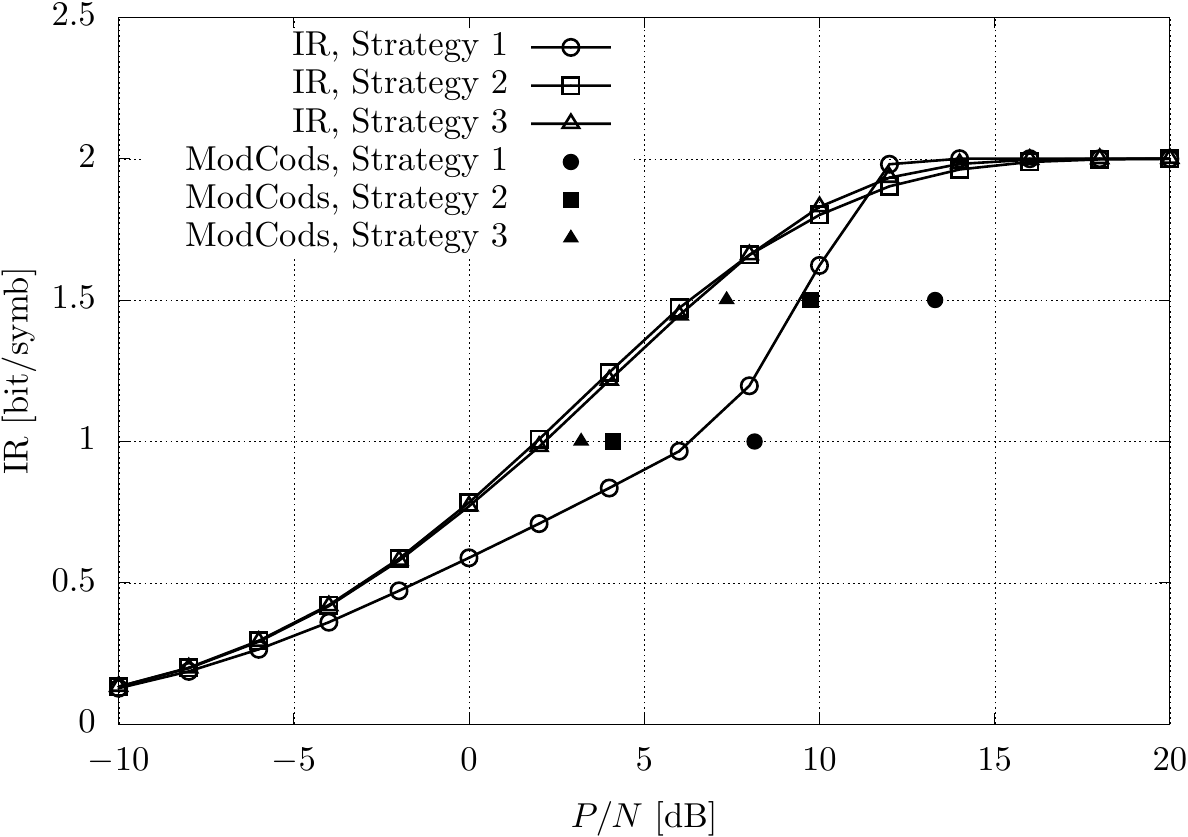}
		\vspace{-0.5cm}
		\caption{IR and ModCods of ``User 1'' in case~1. ModCods are based on DVB-S2 LDPC codes.}\label{fig:IR}
		\vspace{-0.5cm}
	\end{center}
\end{figure}

\subsubsection{LDPC design for iterative detection/decoding}\label{sec:LDPC design}
Figure~\ref{fig:exit2} shows the EXIT chart for \textbf{strategy~2} in case 1. The mutual information (MI) curve of the MUD has been obtained for $P/N\!=\!3$~dB, while the considered codes have rate 1/2. Let us first focus on the MI curve of the LDPC code of the DVB-S2 standard. The EXIT chart analysis reveals that the DVB-S2 codes do not fit the detector, which means that codes designed for systems employing single-user detection in an interference-free scenario are not the best choice for the considered MUD schemes. The EXIT chart of \textbf{strategy~1} has similar features. This observation pushed us into the redesign of the LDPC.
\begin{figure}
	\begin{center}
		\includegraphics[width=1.\columnwidth]{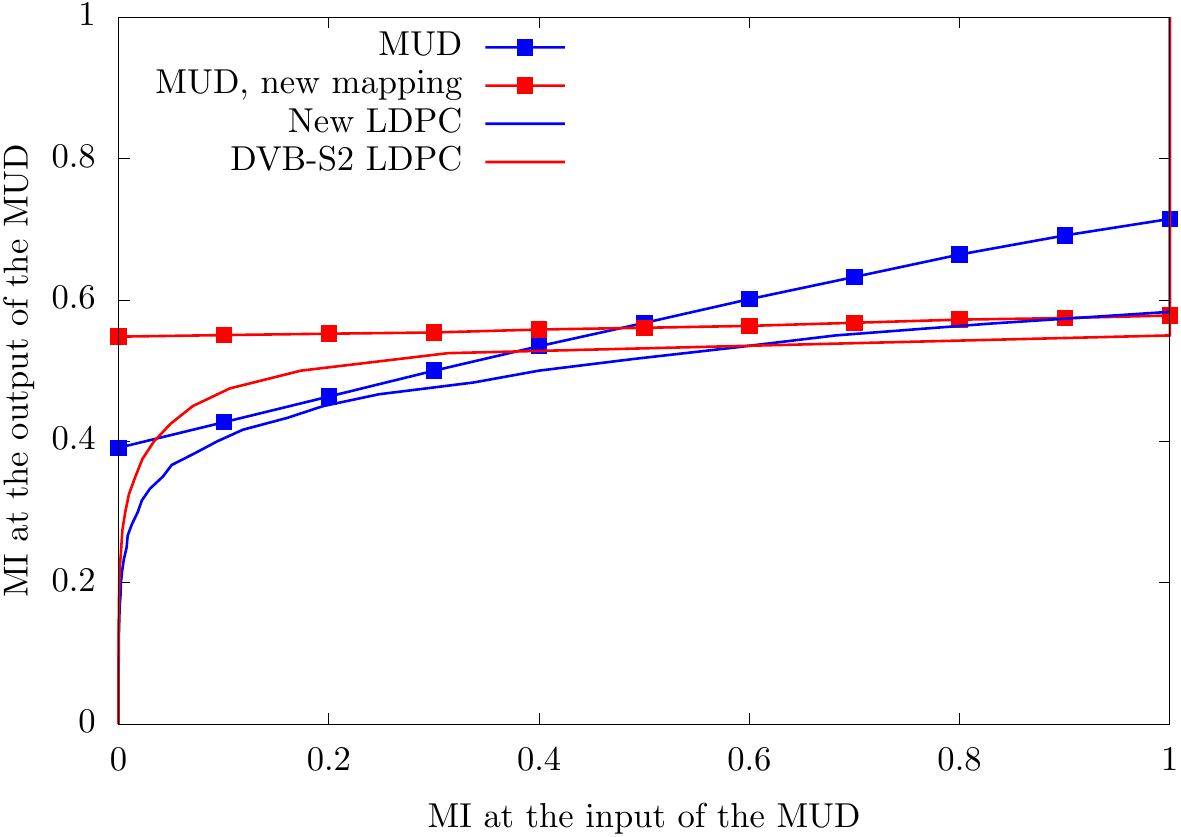}
% 				\vspace{-0.5cm}
		\caption{EXIT chart for \textbf{strategy~2} in case 1 at $P/N=3$~dB.}\label{fig:exit2}
% 		\vspace{-0.5cm}
	\end{center}
\end{figure}

\begin{table}[!t]
 \setlength{\tabcolsep}{4.0pt}
 \scriptsize \centering \caption{Details of designed LDPC codes.}
 \label{t:new codes}

 \vskip -2.0ex %take space out between caption and table
 \begin{tabular}{lccc}
 \hline \\ [-2.0ex]
         &rate & VND distribution  & CND distribution \\
 \hline
 \\ [-2.0ex] \hline  \\ [-2.0ex]
 Strategy 1 & 1/2 &2 (60\%) 3 (31.4\%) 10 (8.6\%) &  6 (100\%) \\\hline
Strategy 2 & 1/2 &2 (60\%) 3 (36.5\%) 20 (3.5\%) &  6 (100\%) \\\hline
Strategy 1 & 3/4 &2 (80\%) 3 (18.3\%) 50 (1.7\%) & 12 (100\%) \\\hline
Strategy 2 & 3/4 &2 (70\%) 3 (28.5\%) 50 (1.5\%) & 12 (100\%) \\\hline
 \end{tabular}
  \vspace{-0.3cm}
 \end{table}

\begin{table}[!t]
 \setlength{\tabcolsep}{4.0pt}
 \scriptsize \centering \caption{BER versus information rate. Performance of new LDPC codes compared to DVB-S2 codes in case~1.}
 \label{t:BERvsIR}
%  \def\Hline{\noalign{\hrule height 2\arrayrulewidth}}
%  \vskip -2.0ex %take space out between caption and table
 \begin{tabular}{lccccc}
 \hline \\ [-2.0ex]
        & rate &IR th. & DVB Code (gap) & New Code (gap)  & phase shift\\
        \hline
 \\ [-2.0ex] \hline  \\ [-2.0ex]
Strategy 1 & 1/2 &6.3 dB   & 8.15 dB (1.85)  & 7.4 dB (1.1)       & $5/16\pi$  \\\hline
Strategy 2 & 1/2 &1.95 dB & 4.1 dB (2.15)    & 3.05 dB (1.1)     & $1/4 \pi$ \\\hline
Strategy 1 & 3/4 & 9.45 dB&  13.3 dB (3.85) & 11.75 dB (1.55) & $1/4 \pi$ \\\hline
Strategy 2 & 3/4 &6.25 dB & 9.75 dB (3.5)   & 7.9 dB (1.65)      & $1/4 \pi$ \\\hline
% Scen. 1 & 2 & 4.4 dB& 6.1 dB (1.7) &  dB () & \\\hline
% Scen. 2 & 2 & 2.55 dB & 4.55 dB (2.0) &  dB ()  & 3.7 dB (1.15) \\\hline
% Scen. 1 & 3 & 2.8 dB& 3.7 dB (0.9) &  dB () & \\\hline
% Scen. 2 & 3 & 3.55 dB & 4.9 dB (1.35) &  dB ()  &  4.55 dB (1.0) \\\hline
 \end{tabular}
  \vspace{-0.5cm}
 \end{table}

The EXIT chart analysis clearly suggests that in our scenario we need an LDPC that is more powerful at the beginning of the iterative process, to have a better curve matching between detector and decoder. This is not surprising, since, in interference-limited channels, a SISO detector is effectively able to mitigate the interference when the information coming from the decoders is somehow reliable. In other words, we mainly need a good head start. We adopt the heuristic technique for the optimization of the degree distribution of the LDPC variable and check nodes proposed in~\cite{teKrAs04}. This method consists of a curve fitting on EXIT charts. We optimize the VND and CND distribution, limiting for simplicity our optimization procedure to codes with uniform check node distribution and only three different variable node degrees. 

Using this approach, for each strategy in case~1 we design a rate-1/2 and a rate-3/4 LDPC code, whose parameters are summarized in Table~\ref{t:new codes}. The EXIT curve of the new LDPC code with rate-1/2 for  \textbf{strategy~2} is shown in Figure~\ref{fig:exit2}. We found other degree distributions with better EXIT curves matching but with poor error floor when used with finite block length.\footnote{In order to improve the finite length performance, the optimization of the code degree distributions could be used jointly with other techniques, e.g. code doping~\cite{LiRyCh08}.}

The codes of length 64800 are then obtained by using the PEG algorithm~\cite{XiBa04}. In the simulations using the new codes with rate 3/4, we decreased the SNR used by the MUD by 0.5 and 0.25 dB  in \textbf{strategy 1} and \textbf{strategy 2}, respectively. In effect, the increase of the noise variance to be set at the receiver improves the performance at high IR where the presence of interference not accounted by the MUD is more critical than for lower IRs. Moreover, for \textbf{strategy~2} we used two different codes, but with the same degree distribution, for the two signals in order to increase the diversity between them.

Table~\ref{t:BERvsIR} summarizes the BER results at IR 1 and 1.5 bit/symbols in terms of convergence threshold, defined as the $P/N$ corresponding to a BER of $10^{-4}$.  We also report the achievable IR limit in $P/N$ obtained through the information-theoretic analysis and the phase shift between the signals $s_1(t)$ and $s_2(t)$. The results show that the gap between the theoretical and the convergence thresholds can be reduced thanks to the new LDPC codes. 

\subsubsection{Joint bit mapping for strategy 2}\label{sec:joint mapping}
After the observation of the poor match between the curves in the EXIT chart, in Section~\ref{sec:LDPC design} we have seen how to improve the threshold by properly changing the code. Here we propose an alternative approach which is focused on the MI curve of the detector. In particular, we propose a joint mapping of the bits of the two signals in~\textbf{strategy~2}, which works exceptionally well in conjunction with the DVB-S2 codes. The idea comes from the fact that transmitting a single signal with Gray mapping gives rise to a practically horizontal EXIT curve for the detector~\cite{te00}, that is exactly what we need if we want to use the codes of the standard.

Let us assume that $M^{(1)}=M^{(2)}=M$ and that the two signals are phase shifted of an angle equal to $\pi/M$. This last choice grants a simple design of the mapping for the resulting constellation and an IR that is close the optimal one. Given two $M$PSK constellations, it is easy to see that, if we rotate one of them by $\pi/M$, the resulting joint constellation is formed by $M/2$ circles, each composed of $2M$ equally spaced points. Two examples are shown in Figure~\ref{fig:multi_const}, for $M$ = 4 (top) and 8 (bottom). 
\begin{figure}
	\begin{subfigure}{1.\columnwidth}
	  \centering
	  \includegraphics[width=1.\columnwidth]{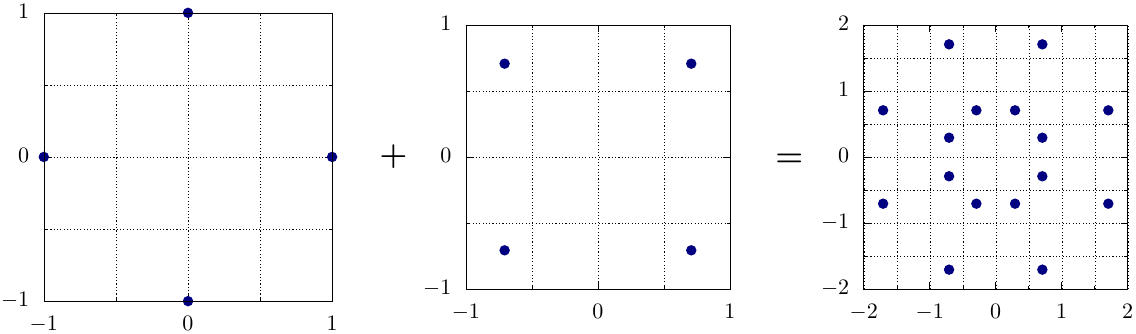}
% 	  \caption{Sum of two QPSK constellations.}
	  \label{fig:2x4psk}
	\end{subfigure}\\
	\begin{subfigure}{1.\columnwidth}
	  \centering
	  \includegraphics[width=1.\columnwidth]{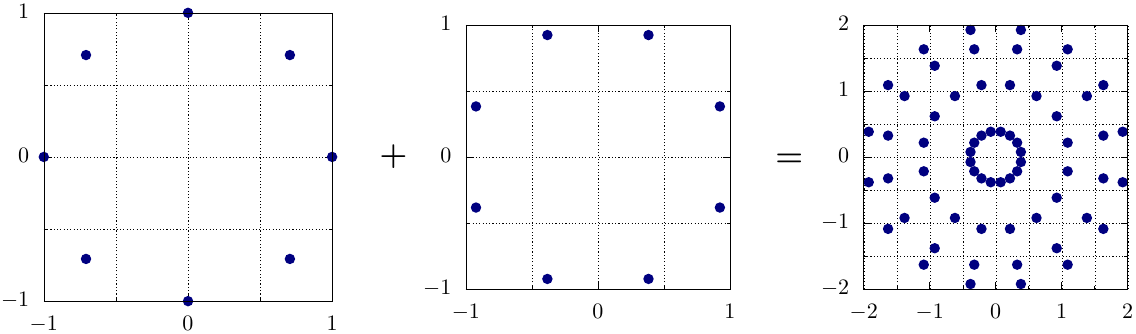}
% 	  \caption{Sum of two 8PSK constellations.}
	  \label{fig:2x8psk}
	\end{subfigure}
	\vspace{-0.5cm}
	\caption{Joint constellations resulting from two QPSK (top) and from two 8PSK (bottom) constellations.}
	\label{fig:multi_const}
\end{figure}
\begin{figure}
	\begin{center}
		\includegraphics[width=1.\columnwidth]{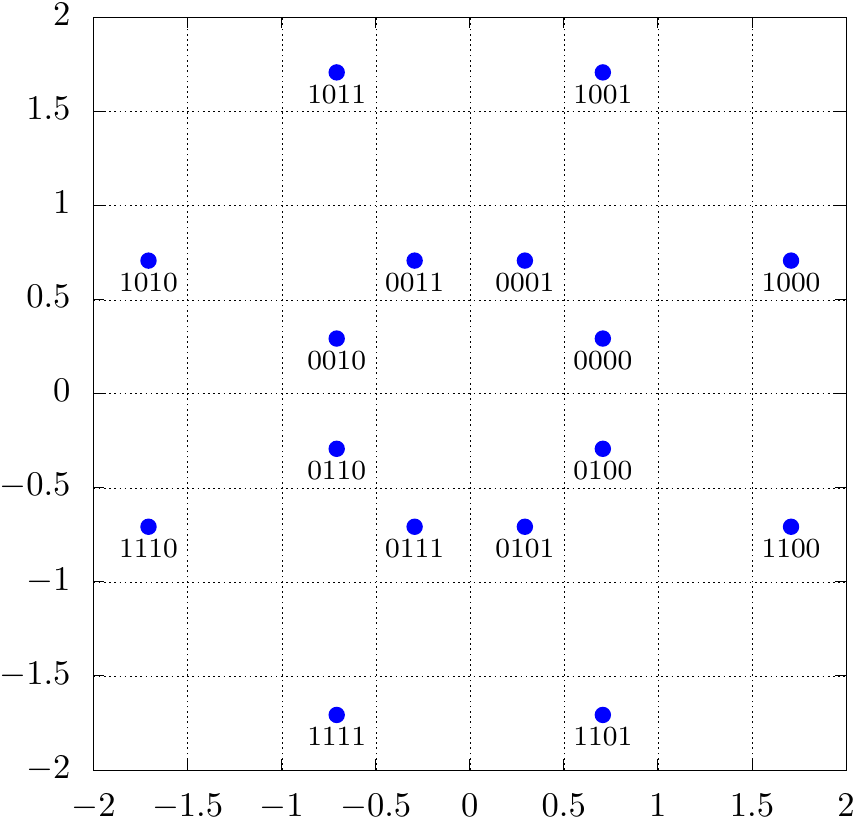}
 		\vspace{-0.1cm}
		\caption{Joint mapping for two QPSK constellations.}\label{fig:mapping}
		\vspace{-0.5cm}
	\end{center}
\end{figure}
We then need to design a good mapping for the joint constellation. Since we have $M^2$ points, we need $\log_2(M^2)$ bits. We choose to use the first $\log_2(M/2)$ bits to identify the circle, and the remaining $\log_2(2M)$ bits to label the points on each circle. Mapping is Gray on each circle and also between adjacent circles. The selected joint mapping is shown in Figure~\ref{fig:mapping} for two QPSK constellations, where the first bit identifies the circle, and the remaining three bits label the points. The EXIT curve of the MUD with joint mapping has smaller slope than that related to the classical mapping, and it is shown in Figure~\ref{fig:exit2}.

A similar approach can be applied in \textbf{strategy~1}, but in this case we can modify only the mapping of signal $s_1(t)$. An example is shown in Figure~\ref{fig:mapping_s1}, for two QPSK constellations: on the left we show the classical mapping, on the right the new mapping, where in red we have the bits of ``User~1'' and in black the bits of ``User~2'', which we are not allowed to modify. We can see that the new mapping is more similar to a Gray mapping than the standard one, in the sense that the distance among adjacent symbols is decreased.

 \begin{figure}[!t]
 	\begin{subfigure}{0.5\columnwidth}
 	  \centering
 	  \includegraphics[width=1.\columnwidth]{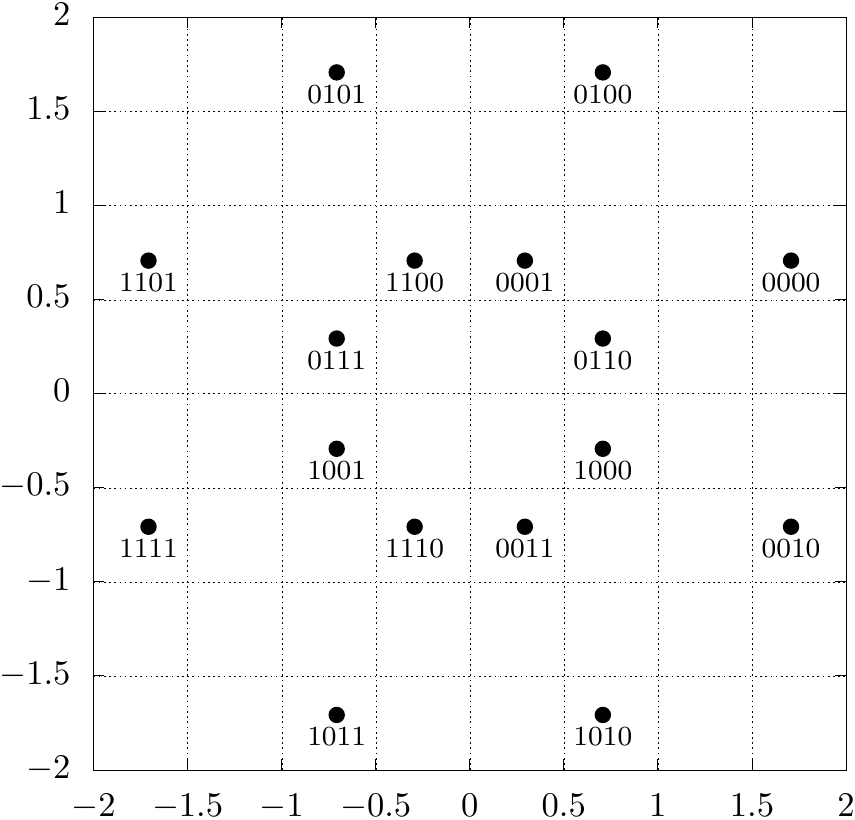}
  	  \caption{Classical mapping.}
 	  \label{fig:2x4psk_classic}
 	\end{subfigure}~
 	\begin{subfigure}{0.5\columnwidth}
 	  \centering
 	  \includegraphics[width=1.\columnwidth]{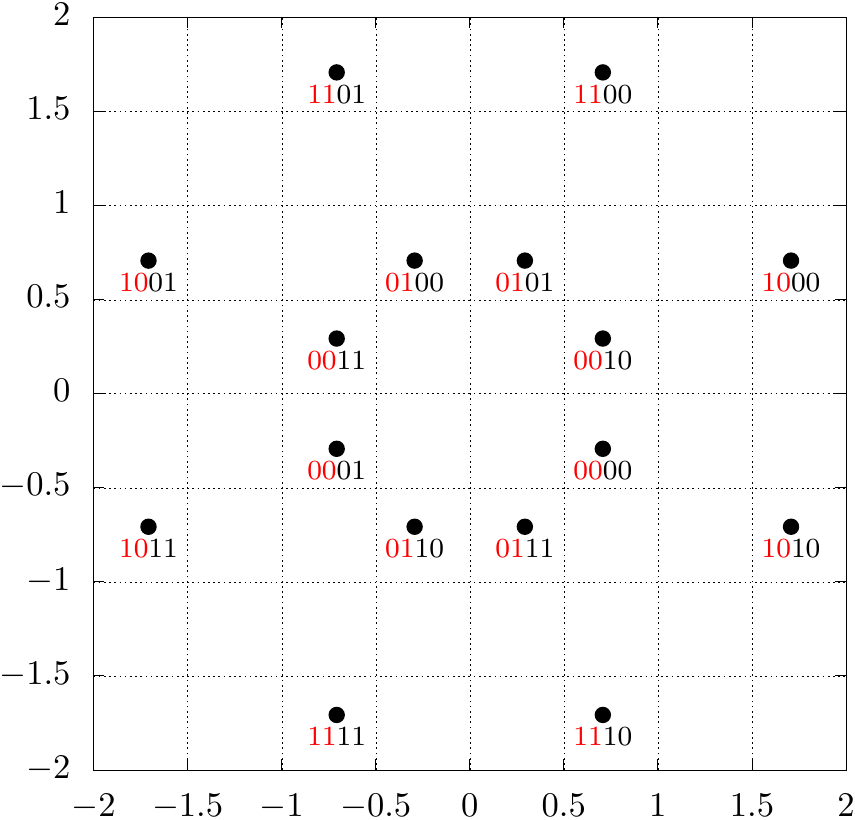}
 	  \caption{New mapping.}
 	  \label{fig:2x4psk_new}
 	\end{subfigure}
 % 	\vspace{-0.5cm}
 	\caption{Joint mappings for \textbf{strategy~1}. Classical (a) and new (b) mapping.}
 	\label{fig:mapping_s1}
 \end{figure}

The BER performance of the joint mapping in \textbf{strategy~2} is reported in Figure~\ref{fig:BER} for the three power profiles in Table~\ref{t:cases}. Both signals $s_1(t)$ and $s_2(t)$ adopt QPSK modulation and DVB-S2 codes with rate $1/2$. The results are compared with the curves of the standard which refer to the classical mapping and with the related  IR thresholds. We observe that the joint mapping improves the performance of the reference curves and the gap with respect to the IR threshold is around 1~dB in all cases.
\begin{figure}
	\begin{center}
		\includegraphics[width=1.\columnwidth]{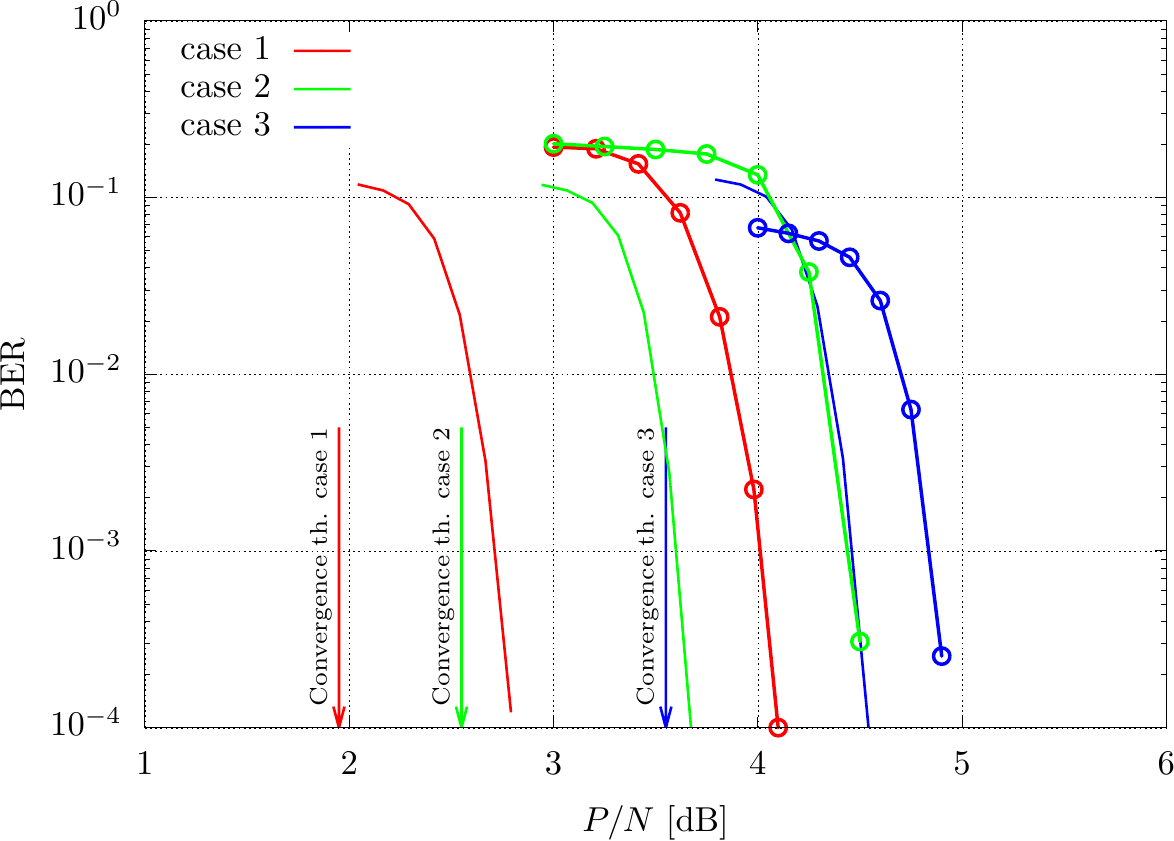}
		\vspace{-0.5cm}
		\caption{BER curves for ``User ~1'' in the case of \textbf{strategy~2} for the three interference patterns. Comparison between joint mapping (continuous line) and classical mapping (continuous line with circles).}\label{fig:BER}
		\vspace{-0.5cm}
	\end{center}
\end{figure}

\section{Conclusions}\label{s:conclusions}

We considered the forward link of a multibeam satellite system, and investigated different transmission/detection strategies  to increase the achievable rate in the presence of strong co-channel interference. As expected, multiuser detection allows a significant gain with respect to single-user detection when the user terminal is close to the edge of the coverage area of its beam. However, it is surprising that the other two strategies requiring modifications at medium access control layer, that based on the use of two transponders to serve consecutively two users
and that based on the use of the Alamouti precoder, can sometimes provide even larger gains, although when the interference is negligible (i.e., when the user terminals is in the center of the beam) a significant loss has to be expected from their use. The conclusive picture is thus complex, since our results show that a transmission/detection strategy which is universally superior to the others does not exist, but the performance depends on several factors, such as the SNR, the interference profile, and the rate of the strongest interferer. This fact outlines the importance of the proposed analysis framework, which can avoid to resort to computationally intensive simulations. Its extension to perform a system analysis, averaging the results on all possible interference profiles within a beam, is also a very interesting subject of investigation.

Finally, we bore evidence that DVB-S2(X) codes, designed for an interference-free scenario, are not suited when a significant interference is present. A proper redesign of the code and/or of the bit mapping can, however, solve the problem.

\bibliographystyle{ieeetr}

\end{document}